\newcommand{\R}{{\mathbb R}}
\newcommand{\N}{{\mathbb N}}
\newcommand{\J}{{\cal J}}
\newcommand{\be}{\begin{equation}}
\newcommand{\ee}{\end{equation}}
\newcommand{\ba}{\begin{array}}
\newcommand{\ea}{\end{array}}
\newcommand{\baa}{\left[\begin{array}}
\newcommand{\eaa}{\end{array}\right]}
\newcommand{\beqa}{\begin{eqnarray}}
\newcommand{\eeqa}{\end{eqnarray}}
\newcommand{\bt}{\begin{tabular}}
\newcommand{\et}{\end{tabular}}
\newcommand{\bi}{\begin{itemize}}
\newcommand{\ei}{\end{itemize}}
\newcommand{\bc}{\begin{center}}
\newcommand{\ec}{\end{center}}
\newtheorem{lemma}{Lemma}
\newtheorem{prop}{Proposition}
\newtheorem{remark}{Remark}
\newcommand{\abs}[1]{\left|#1\right|}
\newcommand{\norm}[1]{\left\lVert#1\right\rVert}
\newcommand{\eor}{\ensuremath{\hfill\blacklozenge}}
\DeclareMathOperator{\sign}{sign}
\newlength{\algitab}
\newcommand{\eot}{\hfill\ensuremath{\blacksquare}}
\newcommand{\bbm}{\begin{bmatrix}}
\newcommand{\ebm}{\end{bmatrix}}
\newcommand{\hlight}[2]{\textcolor{black}{#2}}
\begin{document}
\title{Learning Dictionaries from Physical-Based Interpolation for Water Network Leak Localization}

\author{Paul~Irofti,~\IEEEmembership{Member,~IEEE,}
        Luis~Romero-Ben,
        Florin~Stoican,~\IEEEmembership{Member,~IEEE,}
        and~Vicenç~Puig
\thanks{Paul Irofti is with the 
Research Center for Logic, Optimization and Security (LOS), Department of Computer Science, Faculty of Mathematics and Computer Science, University of Bucharest, 010014 Romania, 
and with the
Institute for Logic and Data Science, Popa Tatu 18, Bucharest, 010805 Romania,
e-mail: {paul@irofti.net}}%
\thanks{Luis Romero-Ben and Vicenç Puig are with the Institut de Robòtica i Informàtica Industrial, CSIC-UPC, Llorens i Artigas 4-6, 08028, Barcelona, Spain, e-mail: \{luis.romero.ben,vicenc.puig\}@upc.edu}
\thanks{Vicenç Puig is also with the Supervision, Safety and Automatic Control Research Center (CS2AC) of the Universitat Politècnica de Catalunya, Campus de Terrassa, Gaia Building, Rambla Sant Nebridi 22, Terrassa, 08222, Barcelona, Spain}
\thanks{Florin Stoican is with the  Deptartment of Automation Control and Systems Engineering, Politehnica University of Bucharest, 060042 Romania, e-mail: {florin.stoican@upb.ro}.}
}

\markboth{Journal of \LaTeX\ Class Files,~Vol.~14, No.~8, August~2021}%
{Shell \MakeLowercase{\textit{et al.}}: A Sample Article Using IEEEtran.cls for IEEE Journals}

\IEEEpubid{0000--0000/00\$00.00~\copyright~2021 IEEE}

\maketitle
\begin{abstract}
This article presents a leak localization methodology based on state estimation and learning. The first is handled by an interpolation scheme, whereas dictionary learning is considered for the second stage. The novel proposed interpolation technique exploits the physics of the interconnections between hydraulic heads of neighboring nodes in water distribution networks. Additionally, residuals are directly interpolated instead of hydraulic head values.  The results of applying the proposed method to a well-known case study (Modena) demonstrated the improvements of the new interpolation method with respect to a state-of-the-art approach, both in terms of interpolation error (considering state and residual estimation) and posterior localization.
\end{abstract}

\begin{IEEEkeywords}
leak localization, water distribution network, dictionary learning, state estimation, interpolation
\end{IEEEkeywords}

\section{Introduction}\label{section:introduction}

\IEEEPARstart{T}{he} efficient management of water resources plays a crucial role in modern society, as clean water is vital for human life. Despite this, around four billion people suffer from severe water scarcity during at least one month per year \cite{Mekonnen2016}. This will only get worse in the near future, considering the continuous growth in the world-wide population, the climate change and the massification of urban areas \cite{Cohen2004}. 


Consequently, water utilities are greatly interested in developing methodologies for the efficient distribution of water. One of their main interests is the monitoring of leaks over water distribution networks (WDNs), which are estimated to exceed 126 billion cubic meters per year worldwide \cite{Liemberger2019}. This deeply impacts society, not only due to the economical and operational cost, but because of the environmental \cite{Xu2014} and sanitary \cite{Lechevallier2003} repercussions.  The research on this field has evolved from initial works \cite{Sterling1984,Pudar1992}, rapidly growing and leading to a wide range of methods (see \cite{RomeroBen2023} for a recent review about leak detection and localization methods).


Model-based schemes use calibrated hydraulic models (in terms of network properties and nodal consumption) to compare simulated hydraulic data with measurements from the real network. This approach goes back to the first works on the subject, when analysis of pressure sensitivity to the existence of leaks was studied \cite{Pudar1992}. The concept of sensitivity was further exploited in well-known articles, by means of a fault-sensitivity-matrix (FSM) that stores the effect of every possible leak on each junction \cite{Perez2014,Steffelbauer2022}. Additional techniques like the solution of the inverse problem, which consists of calibrating network parameters and demands from pressure/flow data, have also been widely used to detect/locate leaks 
\cite{Wu2010,Quinones2021}. Bayesian-based approaches \cite{Costanzo2014} or fuzzy logic \cite{Sanz2012} have also been explored. Their effectiveness is limited due to the diversity and complexity of existent networks \cite{Kim2016}, the difficult selection/calibration of the required hydraulic models \cite{Menapace2018} and the existence of modeling errors \cite{Blesa2018}. 


Recently, data-driven methods have started to be considered in order to circumvent the necessity of a hydraulic model, with the aim of avoiding the model-related drawbacks. Most of them rely on the exploitation of hydraulic data from deployed sensors and the network's topological features. For instance, interpolation-based techniques \cite{Soldevila2020,RomeroBen2022} estimate the values of unknown hydraulic variables from the set of measured ones; and graph-based methods \cite{Rajeswaran2017,Alves2022} exploit graph-related techniques (e.g. clustering) or properties to help solving the localization task. Their performance is satisfactory, particularly when the goal is to identify a leak search area over the network, even more when considering their independence of hydraulic models. Nevertheless, their accuracy tends to be lower in comparison to model-based schemes when their aim is to find the exact leak location. 
Moreover, the critical dependency on hydraulic measurements may lead to the necessity of a large set of sensors, increasing the operational cost.

The existing drawbacks of model-based and data-driven approaches, together with the development of data analysis and machine learning algorithms, led to the appearance of
mixed model-based/data-driven techniques. They use those algorithms to reduce the dependence on a hydraulic model, while maintaining the node-level accuracy.
Different algorithms, such as artificial neural networks (ANNs) \cite{Capelo2021}, support vector machine (SVM) \cite{AresMillan2021}, deep learning \cite{Romero2022} and dictionary learning (DL) \cite{IS17_ifac}; have been studied. 

\IEEEpubidadjcol

This paper develops a mixed model-based/data-driven methodology based on a novel hydraulic state interpolation technique, related to the one presented in \cite{RomeroBen2022}, and dictionary learning \cite{ISP20_jpc}.  This work continues the research presented in \cite{Irofti2022} \hlight{C1 - R4}{introducing a new physical-based interpolation method} \hlight{C6 - R11}{ and significantly improving on the simulation and related analysis}. A set of physical-based weights are derived from the WDN properties, namely pipe connectivity, lengths, diameters and roughness. These weights are used to pose a quadratic programming problem, which retrieves the pressure residuals (difference of pressure between leak and leak-free scenarios) for the complete network by means of the known values from sensorized nodes. Then, a group of these residuals are selected to be the samples provided to the learning stage, considering both the actual deployed sensors and a list of virtual sensors (VSs). The specific leak event linked to each selected residual is utilized to generate labels for the supervised training. 
Thus, the goal of combining interpolation and learning is the increase on the information provided to the DL approach (in comparison to merely supplying measurements), and the achievement of a satisfactory node-level localization accuracy that might be difficult for purely data-driven techniques while minimizing the usage of the hydraulic model (it is only necessary to obtain a set of leak samples). In this way, the strengths of each interpolation and learning/classification are boosted, while their weaknesses are diminished.

\noindent\textbf{Contributions:}

\textcolor{black}{The novel methodology presented in \cite{Irofti2022} has been updated and improved in this work, leading to a set of advantages related to performance and implementation:} 

\begin{itemize}
    \item \hlight{C1 - R4}{The methodology presented in \cite{Irofti2022} combined an existing interpolation method \cite{RomeroBen2022} with dictionary learning \cite{ISP20_jpc}. This interpolation scheme is} \hlight{C6 - R11}{based on a graph weighting approach that only considers structural features (pipe lengths). This may be desirable to reduce the scheme's information needs, but degrades the resulting interpolation performance. In this paper, a new physical-based weight generation process is introduced, whereas the quadratic programming problem posed in previous articles is reformulated to fit the derivations of the new physical-based formulation. In this way, both interpolation and localization errors are reduced in comparison to the results provided by the original interpolation.}
    \item\hlight{C3 - R4}{
    Preliminary results in \cite{Irofti2022} encouraged the use of VSs data} \textcolor{black}{
    obtained via interpolation in the dictionary learning process. 
    The main issue we faced revolved around
    the errors introduced by interpolated data, resulting in dictionaries modeling water networks
    that are different from the actual physical network under analysis
    (i.e. the process was not robust to interpolation noise). Tackling this artifact involved choosing only a few virtual sensors through human in the loop or grid-search techniques, because increasing the number of VSs would cause a loss of FDI performance.
    In this paper, the introduced theoretical approach producing analytical weights (used in the physical-based interpolation)
    allows any or all the non-sensorized nodes to be used as VSs in the dictionary learning process. Thus, the VSs are selected using purely data-driven sensor placement, unlike in \cite{Irofti2022}. The method presented in \cite{RomeroBen2022b} is adapted to seek the set of VSs that, added to the real sensors, minimizes a pre-defined distance-based metric, computed considering the distance from all the nodes to the sensors.
    }
    \item \textcolor{black}{Furthermore, the interpolation-learning scheme proves its capability of balancing out the main downside of a state-of-the-art data-driven methods (including the approach in \cite{Irofti2022}), namely the unreliable localization at node-level.}
\end{itemize}

\noindent\textbf{Notations.}
Vectors and matrices will be denoted in bold with lowercase and, respectively, uppercase letters.
Scalars will be denoted with plain lowercase letters.
For matrix $\bm{M}$ we denote $\bm{m}_i$ as its $i^{\text{th}}$ column and $m_{ij}$ as the entry of coordinates $i-j$. 
We denote approximations with a hat: $\bm{\hat{M}}$ is the approximation of $\bm{M}$. The notation $|\cdot|$ indicates cardinality when applied to sets, and absolute value otherwise. 

\noindent\textbf{Outline.}
The remainder of the paper is structured as follows.
Section \ref{section:preliminaries} presents 
the preliminary background regarding the considered interpolation techniques.
Section~\ref{section:approach} contains
the theoretical contribution consisting of 
the derivation of physically-based weights, and
their integration in the interpolation process
together with the leak isolation mechanism via dictionary learning.
The methodology from Section~\ref{section:methodology}
presents the integration of the interpolation and isolation process
together with the most important implementation aspects.
\textcolor{black}{Then, Section \ref{section:results} introduces the considered case study, presenting the specific implementation details of the method for the designed experiments. Obtained results for these experiments are subsequently exposed for both the interpolation and the localization performance.} Finally, Section \ref{section:conclusions} draws several conclusions about the proposed methodology, regarding its contributions, its applicability and future work-lines.


\section{Preliminaries}\label{section:preliminaries}



Arguably, all leak management methodologies crucially depend on the accuracy and quantity of hydraulic information, which can be obtained from hydraulic models and/or a sufficiently large and well-distributed set of sensors. However, these can be limited by the operational and economical costs of creating/calibrating models and installing/maintaining sensors. \hlight{C3 - R11}{A compromise solution lies in interpolation algorithms, which estimate the WDN hydraulic state exploiting the available WDN structure and hydraulic data collected from a reduced set of sensors\cite{Soldevila2021,RomeroBen2022}. Let us underscore key aspects regarding these two sources of information of the network:}

\begin{itemize}
\item \textcolor{black}{The WDN structure can be modeled through a graph $\mathcal{G}=(\mathcal{V},\mathcal{E})$, where $\mathcal{V}$ denotes the set of nodes (reservoirs and junctions); and $\mathcal{E}$ is the set of edges (pipes). An arbitrary node is denoted as $\mathscr{v}_i\in\mathcal{V}$, while an arbitrary edge is referred to as $\mathscr{e}_k = (\mathscr{v}_i,\mathscr{v}_j)\in\mathcal{E}$, representing the link between nodes $\mathscr{v}_i$ and $\mathscr{v}_j$, considering the first as the source and the second as the sink. The network is composed by $n=\lvert \mathcal{V}\rvert$ nodes and $m = \lvert \mathcal{E}\rvert$ edges. In practice, network connectivity and pipe lengths are normally available to water utilities (considering tools like geographical information systems (GIS) \cite{Taher1996}), while pipe diameters and roughness can be approximately estimated.}

\item \textcolor{black}{Normally, interpolation-based schemes consider nodal pressure or, commonly, the hydraulic head (nodal pressure plus elevation), as representative of the network state. This is justified by the fact that leaks manifest as pressure drops, and pressure sensors are cheaper and easier to install than other metering devices, such as flow sensors. Thus, a set of $n_{\zeta}$ pressure sensors must be installed across the network, leading to a set of measured hydraulic heads. Note that the heads of connected nodes are related by the flow conveyed from source to sink through a non-linear equation such as the Hazen-Williams formula \cite{sanz2016demand}. This relation implies that the hydraulic head of the source will always be higher than the one of the sink, if no active elements such as pumps are installed.}
\end{itemize}

All these aspects were considered to develop the Graph-based State Interpolation (GSI) technique from \cite{RomeroBen2022}. The core idea was to approximate the aforementioned non-linear node head dependency by a linear relation, 
\begin{equation}\label{eq:1}
    \hat{\psi}_i = \frac{1}{\phi_i}\bm{\omega}_i\bm{\hat{\psi}},
\end{equation}
\noindent where $\bm{\hat{\psi}}\in \mathbb{R}^{n}$ stands for the complete state vector, which approximates the actual hydraulic head vector through a mix of estimated and known values. 
\hlight{C1 - R11}{$\bm{\omega_i}$ is the \textit{i-th} row of the weighted adjacency matrix $\bm{\Omega}\in \mathbb{R}^{n \times n}$}, 
which encodes the weights of the connection between nodes. Specifically, 
GSI (as defined in \cite{RomeroBen2022}) 
gives higher weights to closer neighbors 
by taking them as the inverse of the actual pipe lengths     
\begin{equation}\label{eq:weighting_selection}
    \begin{gathered}
        \omega_{ij} = \begin{cases} \frac{1}{p_{k}}, & \mbox{if } \mathscr{e}_k = (\mathscr{v}_i,\mathscr{v}_j) \in \mathcal{E}, \\
        0, & \mbox{otherwise, } 
        \end{cases}       
    \end{gathered}
\end{equation}
    
\noindent where $\bm p\in \mathbb{R}^{m}$ is the pipe lenghts vector.
Moreover, $\phi_i$ stands for the \textit{i-th} element of the diagonal of the degree matrix $\bm{\Phi}\in \mathbb{R}^{n \times n}$, directly obtained from $\bm{\Omega}$ by taking $\phi_i=\sum_{j=1}^{n} \omega_{ij}$.

The flow directionality in WDNs is imposed by the non-linear relation between hydraulic heads of adjacent nodes, with flow traversing from source to sink. 
\hlight{C5 - R11}{This directionality can be encoded by means of the edge-node incidence matrix $\bm \Lambda\in \mathbb R^{m\times n}$, whose entries are defined as follows}

\begin{equation} \label{eq:incidence}
\begin{gathered}
\lambda_{kj} = \begin{cases} \hphantom{-}1, & \mbox{if } \mathscr{e}_k = (\mathscr{v}_i,\mathscr{v}_j)\in \mathcal{E},\\
-1, & \mbox{if } \mathscr{e}_k = (\mathscr{v}_j,\mathscr{v}_i)\in \mathcal{E},\\
\hphantom{-}0, & \mbox{otherwise.} 
\end{cases} 
\end{gathered}
\end{equation}

Nevertheless, 
water utilities do not usually have access to a sufficient amount of flow data 
to obtain an accurate hydraulic-based incidence matrix. Thus, GSI considers a structural approach to construct an approximated incidence matrix $\bm{\hat{\Lambda}}$, as detailed in Algorithm \ref{alg:b-generation}. 
\begin{algorithm}[!ht]
\caption{Approximated incidence matrix generation}
\label{alg:b-generation}
\begin{algorithmic}[1]
\REQUIRE {$\mathcal{G}=(\mathcal{V},\mathcal{E}), \mathcal{V}_{\mathcal{R}} \subseteq \mathcal{V}$}
\STATE Initialize $\bm U = \bm 0_{(n\times n)}, \bm{\hat{\Lambda}} = \bm 0_{(m\times n)}$
\STATE Compute $\mathcal{V}_{\mathcal{N}} = \mathcal{V} \setminus \mathcal{V}_{\mathcal{R}}$
\FORALL{$\mathscr{v}_{\mathcal{R}} \in \mathcal{V}_{\mathcal{R}}$} 
\FORALL{$\mathscr{v}_{\mathcal{N}} \in \mathcal{V}_{\mathcal{N}}$} 
\STATE $\mathfrak{\bm s}^{path} = shortest\_path(\mathcal{G},\mathscr{v}_{\mathcal{R}},\mathscr{v}_{\mathcal{N}})$
\FOR{$i = 1,...,|\mathfrak{\bm s}^{path}|-1$} 
\STATE $\bm U(\mathfrak{s}_i^{path},\mathfrak{s}_{i+1}^{path}) = \bm U(\mathfrak{s}_i^{path},\mathfrak{s}_{i+1}^{path}) + 1$
\ENDFOR
\ENDFOR
\ENDFOR
\FORALL{$\mathscr{e}_k = (\mathscr{v}_i,\mathscr{v}_j) \in \mathcal{E}$}
\IF{$\bm U(\mathscr{v}_i,\mathscr{v}_j) > \bm U(\mathscr{v}_j,\mathscr{v}_i)$} 
\STATE $\hat{\lambda}_{ki} = -1; \hat{\lambda}_{kj} = 1$ 
\ELSE 
\STATE $\hat{\lambda}_{ki} = 1; \hat{\lambda}_{kj} = -1$
\ENDIF
\ENDFOR
\RETURN $\bm{\hat{\Lambda}}$
\end{algorithmic}
\end{algorithm}

Note that $\mathcal{V}_{\mathcal{R}}, \mathcal{V}_{\mathcal{N}}$ stand for the sets of reservoirs and inner nodes respectively\footnote{The algorithm exploits the fact that water must flow from the reservoirs to the rest of the water-demanding nodes.}, and $shortest\_path(\cdot,\cdot,\cdot)$ generates the shortest path in a graph between two nodes \cite{Chen2003}, encoded as the output sequence/path $\mathfrak{\bm s}^{path}$ \hlight{C4 - R11}{(step 5)}. Additionally, $\bm U \in \R^{n\times n}$ stores the number of times that 
a pipe has been considered in 
each $\mathfrak{\bm s}^{path}$ \hlight{C4 - R11}{(step 7)}. Thus, the most used direction for each pipe is considered and encoded into the generated incidence matrix $\bm{\hat{\Lambda}}$ \hlight{C4 - R11}{(steps 13 and 15). In summary, the heuristic under consideration poses that if a greater number of paths have traversed a pipe in a specific direction, it is reasonable to assume that the flow within the pipe will predominantly follow that same direction.}

Both weight selection and directionality requirements are integrated in an optimization problem whose solution is the interpolated graph state (approximated hydraulic heads), \cite{RomeroBen2022}:
\begin{subequations}
\label{eq:2}
\begin{align}
\label{eq:2_a}
\min_{\bm{\hat{\psi}}} \quad & \frac{1}{2}\big[\bm{\hat{\psi}}^T\bm{L}\bm{\Phi}^{-2}\bm{L}\bm{\hat{\psi}}+\mu \gamma ^2\big],\\
\label{eq:2_b}\textrm{s.t.} \quad & \bm{\hat{\Lambda}} \bm{\hat{\psi}}\leq \bm{1}_{n}\cdot\gamma,\\
\label{eq:2_c}& \gamma > 0,\\
\label{eq:2_d}&\bm{Z}\bm{\hat{\psi}}=\bm{\hat{\psi}}_{\zeta}.  
\end{align}
\end{subequations}



\hlight{C3 - R11}{The first term of objective \eqref{eq:2_a} comes from enumerating a relaxation of \eqref{eq:1} for all nodes in $\mathcal V$. That is, we take the quadratic error of the left and right-sides of \eqref{eq:1}:}

\begin{equation}\label{eq:minim-obj-func}
    \sum_{i=1}^{n}\Big[\hat{\psi}_i - \frac{1}{\phi_i}\bm{\omega}_{i}\bm{\hat{\psi}}\Big]^2 = \bm{\hat{\psi}}^T\bm L\bm\Phi^{-2}\bm L\bm{\hat{\psi}},
\end{equation}



\noindent \textcolor{black}{which becomes the term in \eqref{eq:2_a} when taking $\bm L=\bm\Phi - \bm\Omega$, i.e., the Laplacian of $\mathcal G$; and after standard linear algebra manipulations.} 

\textcolor{black}{The second term of (\ref{eq:2_a}) and the constraint in (\ref{eq:2_b}) are directly related to the flow directionality.  In vector $ {\bm\Lambda} \bm\psi$, the k-th term contains ${\lambda}_{ki}\psi_i+{\lambda}_{kj}\psi_j$, which is negative as per the construction in \eqref{eq:incidence}, and hence we expect ${\bm\Lambda} \bm\psi\leq 0$. However, we do not have access to the actual flow directionalities (we have instead $\hat{\bm \Lambda}\hat{\bm \psi}$), so we relax the initial inequality to the form from \eqref{eq:2_b} via the positive slack term $\gamma$ (\ref{eq:2_c}), which is then minimized through \eqref{eq:2_a}. The relative importance of interpolation error and flow directionality consistency is tunned by parameter $\mu$. }


\textcolor{black}{Finally, the available hydraulic information $\bm{\hat{\psi}}_{\zeta}$ is 
provided through the equality constraint in (\ref{eq:2_d}), which restricts the state of the sensorized nodes to be equal to the actual hydraulic heads at the corresponding junctions. Matrix $\bm Z \in \mathbb{R}^{n_{\zeta}\times n}$ is defined so that $z_{oj} = 1$ only if the $o-th$ sensor is located in node $\mathscr{v}_j$, and 0 otherwise.}  

\begin{remark}\label{remark:sGSI}
 The interpolation of vertex values over a graph following smoothing strategies was studied in the past \cite{Zhu2003}. These techniques solve a problem that intends to minimize an analogue expression to $\frac{1}{2}\big[\bm{\hat{\psi}}^T\bm{L}\bm{\hat{\psi}}\big]$. In this manner, the harmonic property of functions in graphs is explicitly pursued, i.e., the states of adjacent nodes are aimed to be as similar as possible (GSI does not impose this behavior, giving room to a higher degree of freedom to fulfill the added constraints). 
 \eor
\end{remark}

\begin{remark}
Since GSI solves the quadratic problem \eqref{eq:2}, it has polynomial complexity regarding problem dimension (e.g., solving $N$ problems with the interior point method gives $O(n^{3.5}N)$).\eor
\end{remark}



\section{Proposed Approach}\label{section:approach}

\subsection{Physical-based interpolation}

The weighting selection in GSI, described by \eqref{eq:weighting_selection}, arises from the aim of minimizing the information requirements regarding the network characteristics. To this end, the network connectivity and pipe lengths are its unique inputs related to the WDN topology. However, this degrades the interpolation performance due to the approximations introduced by the weighted linear expression that substitutes the non-linear equation relating the hydraulic heads of adjacent nodes.

Thus, we aim to derive a new weighting approach to improve the interpolation, which better approximates the non-linear nature of the steady-state head-flow equations. 
To this end, let us recall the empirical Hazen-Williams formula\footnote{We redefine the pipe-related variables as follows: considering that $\mathscr{e}_k=(\mathscr{v}_i,\mathscr{v}_j)$, the $k-th$ element of a pipe-related variable $\bm x \in \mathbb{R}^m$ would be denoted as $x_{(i,j)}$ for $i,j=1,2,..,n$ instead of $x_k$ for $ k=1,2,...,m$ (thus, each pair $(i,j)$ is directly related to an index $k$). This improves the readability of the section by expressing all the formulas in terms of node indices.  
}

\be
\label{eq:hw}
\psi_i-\psi_j = \frac{10.67\cdot p_{(i,j)}}{r_{(i,j)}^{1.852}\cdot \delta_{(i,j)}^{4.87}}\cdot f_{(i,j)}\cdot | f_{(i,j)}|^{0.852},
\ee
where $p_{(i,j)} = p_k$ is the length in $[m]$, $\delta_{(i,j)}=\delta_k$ is the diameter in $[m]$, $r_{(i,j)}=r_k$ is the adimensional pipe roughness coefficient and the flow $f_{(i,j)} = f_k, \forall \mathscr{e}_k \in \mathcal{E}$ is measured in $[m^3/s]$. Noting that $\sign( f_{(i,j)})=\sign( \psi_i- \psi_j)$ and that $x=|x|\cdot \sign(x)$, we reformulate \eqref{eq:hw} into
\begin{equation}
    \label{eq:hw_power}
    f_{(i,j)}= \sigma_{(i,j)}^{0.54}\sign(\psi_i-\psi_j)| \psi_i-\psi_j|^{0.54},
\end{equation}
where $\mathbf \sigma_{(i,j)}=\sigma_k=(r_{(i,j)}^{1.852}\cdot \delta_{(i,j)}^{4.87})/(10.67\cdot p_{(i,j)})$ denoting the pipe conductivity and considering that $0.54\approx 1/1.852$.

In order to facilitate the manipulation of $\sign(\psi_i-\psi_j)$ through the rest of the paper, we convert the previously presented edge-node incidence matrix $\bm{\Lambda}$
into node-node incidence matrix $\bm{B}\in\mathbb{R}^{n\times n}$,
which is defined as
\be
\label{eq:b-adj}
 b_{ij}=\begin{cases}\hphantom{-}1,& \psi_i\geq  \psi_j;\\ \hphantom{-}0,& \textrm{if nodes $i$ and $j$ are not connected;}\\ -1,& \psi_i<  \psi_j.\end{cases}
\ee

\begin{remark}
To clarify the relationship between $\bm{\Lambda}$ and $\bm{B}$,
define $\bm{\Lambda}_{\text{out}} = \frac{\bm{\Lambda} + \abs{\bm{\Lambda}}}{2}$
and
$\bm{\Lambda}_\text{in} = -\frac{\bm{\Lambda} - \abs{\bm{\Lambda}}}{2}$
as representing the inner and outer edges
such that
$\bm{\Lambda} = \bm{\Lambda}_{\text{out}} - \bm{\Lambda}_{\text{in}}$.
Let $\bm{A} = \bm{\Lambda}_{\text{out}}^T\bm{\Lambda}_{\text{in}}$
which gives $\bm{B} = \bm{A}^T - \bm{A}$.
With a few linear algebra calculations
we arrive at the straight-forward relation
$\bm{B} =
\frac12 \left(\abs{\bm{\Lambda}}^T \bm{\Lambda} - \bm{\Lambda}^T \abs{\bm{\Lambda}} \right)$.\eor
\end{remark}


Considering that the balance of inflows and outflows passing throughout a node 
of index $i$  can be expressed as:

\begin{equation}
    \label{eq:b}
    \sum\limits_{j:\: b_{ij}\neq 0} f_{(i,j)}=c_i,
\end{equation}
where 
$c_i$ is the consumption (the normal user demand and, potentially, leaks), by introducing (\ref{eq:hw_power}) in (\ref{eq:b}), we arrive at
\begin{equation}
\sum\limits_{j:\: b_{ij}>0}\mkern-16mu \sigma_{(i,j)}^{0.54}(\psi_i-\psi_j)^{0.54}\mkern-4mu-\mkern-16mu\sum\limits_{j:\: b_{ij}<0}\mkern-16mu \mathbf \sigma_{(i,j)}^{0.54}(\psi_j-\psi_i)^{0.54} = c_i,
\end{equation}
or, equivalently stated,
\begin{equation}
\label{eq:wdn_hw}
\sum\limits_{j:\: b_{ij}\neq 0} b_{ij} \sigma_{(i,j)}^{0.54}\left[b_{ij}(\psi_i-\psi_j)\right]^{0.54} = c_i.
\end{equation}

Recall that head values in \eqref{eq:wdn_hw} corresponding to reservoirs are fixed. Without loss of generality, these known head values may be introduced into equations \eqref{eq:wdn_hw} such that only the `true' variables, the junction node heads, remain. In any case, solving the group of equations \eqref{eq:wdn_hw} to retrieve the steady-state node heads is not an easy task. This is why hydraulic simulators like EPANET  \cite{Epanet:2000} are used, as they provide numerical solutions instead of attempting to derive analytic ones.

Thus, our goal is not to solve the WDN problem but rather to exploit the relations underlined by \eqref{eq:wdn_hw} to find first order (linear) dependencies between the current junction node's head, $\psi_i$, and its neighbors, $\psi_j$. To do so, we apply the \emph{implicit function theorem} \cite{narasimhan1985analysis}, as detailed next.

\begin{lemma}
\label{lem:implicit}
For the i-th node relation \eqref{eq:wdn_hw}, assuming that there exist $\bar{\psi}_i$, $\bar{\psi}_j,\forall j\in \J = \{j\; |\; b_{ij}\neq 0\}$ which verify it, there:
\begin{enumerate}[i)]
    \item exists a local (in the neighborhood of $\bar{\psi}_i$, $\bar{\psi}_j$) and explicit dependence between its head ($\psi_i$) and its neighbors' heads ($\psi_j,\forall j \in \J$),
    \begin{equation}
    \label{eq:wdn_g}
        \psi_i=\mathbf g_i(\bm{\psi}_{\J}), 
    \end{equation}
     with 
     $\bm{\psi}_{\J}=\bbm \dots & \psi_j & \dots\ebm^\top\in \mathbb R^{|\J|}$ grouping all neighboring node's head values;
     \item and function's $\mathbf g_i:\mathbb R^{|\J|}\mapsto \mathbb R$ derivatives are given by\footnote{Note that $k$ is used here to index the neighbors of the $i-th$ node during the summation, and it is not the same as the $k$ used before to denote the k-th edge. This must be considered for subsequent analogue usages of $k$ as an index.}
     \begin{equation}
     \label{eq:wdn_gderiv}
         \frac{\partial \mathbf g_i(\bm{\psi}_{\J})}{\partial\psi_j}=\frac{\sigma_{(i,j)}^{0.54}\left[b_{ij}(\psi_i-\psi_j)\right]^{-0.46}}{\sum\limits_{k:\: b_{ik}\neq 0} \mkern-8mu \sigma_{ik}^{0.54}\left[b_{ik}(\psi_i-\psi_k)\right]^{-0.46}},
     \end{equation}
\end{enumerate}
for all $j\in \J$. \eot
\end{lemma}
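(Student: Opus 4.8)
The plan is to apply the implicit function theorem directly to the scalar equation \eqref{eq:wdn_hw} viewed as a function $F_i(\psi_i,\bm\psi_{\J})$ of the $1+|\J|$ variables $\psi_i$ and $\{\psi_j\}_{j\in\J}$, namely
\begin{equation}
\label{eq:Fi-def}
F_i(\psi_i,\bm\psi_{\J}) = \sum_{j:\: b_{ij}\neq 0} b_{ij}\,\sigma_{(i,j)}^{0.54}\left[b_{ij}(\psi_i-\psi_j)\right]^{0.54} - c_i.
\end{equation}
By hypothesis the point $(\bar\psi_i,\bar{\bm\psi}_{\J})$ satisfies $F_i=0$. First I would check that $F_i$ is continuously differentiable in a neighborhood of that point: each summand is $b_{ij}\sigma_{(i,j)}^{0.54}$ times $\left[b_{ij}(\psi_i-\psi_j)\right]^{0.54}$, and since $b_{ij}(\psi_i-\psi_j)=|\psi_i-\psi_j|>0$ whenever the nodes are connected and not at equal head (this is exactly the physical fact, noted after \eqref{eq:hw}, that flow goes strictly from higher to lower head, so $\sign f_{(i,j)}=\sign(\psi_i-\psi_j)$ and the two are never simultaneously zero on an active pipe), the map $t\mapsto t^{0.54}$ is smooth on the relevant open half-line $t>0$. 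Hence $F_i\in C^1$ near $(\bar\psi_i,\bar{\bm\psi}_{\J})$.

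Next I would compute the partial derivative with respect to the ``output'' variable $\psi_i$. Differentiating \eqref{eq:Fi-def}, using $\frac{\partial}{\partial\psi_i}\left[b_{ij}(\psi_i-\psi_j)\right]^{0.54} = 0.54\,b_{ij}\left[b_{ij}(\psi_i-\psi_j)\right]^{-0.46}$ and $b_{ij}^2=1$ on active edges, gives
\begin{equation}
\label{eq:dFdpsii}
\frac{\partial F_i}{\partial\psi_i} = 0.54\sum_{k:\: b_{ik}\neq 0}\sigma_{(i,k)}^{0.54}\left[b_{ik}(\psi_i-\psi_k)\right]^{-0.46}.
\end{equation}
Every term here is strictly positive (again because $b_{ik}(\psi_i-\psi_k)>0$ and $\sigma_{(i,k)}>0$), so $\partial F_i/\partial\psi_i>0$, in particular nonzero, at $(\bar\psi_i,\bar{\bm\psi}_{\J})$. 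The implicit function theorem then yields a neighborhood and a unique $C^1$ map $\mathbf g_i$ with $\psi_i=\mathbf g_i(\bm\psi_{\J})$ locally, which is claim (i). For claim (ii) I would use the standard implicit-differentiation formula $\partial\mathbf g_i/\partial\psi_j = -(\partial F_i/\partial\psi_j)\big/(\partial F_i/\partial\psi_i)$, compute $\partial F_i/\partial\psi_j = -0.54\,\sigma_{(i,j)}^{0.54}\left[b_{ij}(\psi_i-\psi_j)\right]^{-0.46}$ (only the $j$-th summand depends on $\psi_j$, the chain rule brings a factor $-b_{ij}$, and $b_{ij}^2=1$), and substitute \eqref{eq:dFdpsii}; the factors $0.54$ and the signs cancel to give exactly \eqref{eq:wdn_gderiv}.

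The only genuinely delicate point is the regularity/nonvanishing issue at the ``boundary'' where some active pipe carries zero flow, i.e. $\psi_i=\psi_j$: there $t^{-0.46}$ blows up and $t^{0.54}$ has infinite slope, so $F_i$ fails to be $C^1$ there. I would therefore make explicit in the statement (or carry it as a standing assumption, consistent with the physical setup) that the base point $(\bar\psi_i,\bar{\bm\psi}_{\J})$ has $\bar\psi_i\neq\bar\psi_j$ for every $j\in\J$, which is the generic steady-state situation; under this assumption all exponents act on strictly positive quantities and the argument above is routine. A secondary bookkeeping caveat is the abuse of the index $k$ in \eqref{eq:wdn_gderiv} (it ranges over neighbors, not edges) and the fact that $\sigma_{ik}$ there should be read as $\sigma_{(i,k)}$; I would simply note this to avoid confusion, exactly as the paper's footnote already does.
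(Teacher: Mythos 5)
Your proposal is correct and follows essentially the same route as the paper's appendix proof: define the scalar implicit relation $\mathbf f_i(\psi_i,\bm\psi_{\J})=0$ from \eqref{eq:wdn_hw}, differentiate with respect to $\psi_i$ and $\psi_j$, and apply the implicit-function formula with $b_{ij}^2=1$ to obtain \eqref{eq:wdn_gderiv}. You are in fact somewhat more careful than the paper, which defers the regularity and nondegeneracy hypotheses to ``reasonable assumptions'' with a citation, whereas you explicitly verify that $\partial\mathbf f_i/\partial\psi_i>0$ and flag the genuine degenerate case $\bar\psi_i=\bar\psi_j$ where the $t^{0.54}$ terms fail to be $C^1$.
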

\begin{proof} See the appendix.
\end{proof}


\begin{prop}
\label{prop:weights}
For the current node $i$, consider the head values $\bar{\psi}_i$, $\bar{\psi}_j,\forall j\in \J$ which verify \eqref{eq:wdn_hw}. Then, $\mathbf g_i(\cdot)$, defined as in \eqref{eq:wdn_g} is approximated by the linear relation 
\begin{equation}
\label{eq:barhi_firstorder}
    \psi_i=\mathbf g_i(\bm{\psi}_{\J})\approx \bar{\psi}_i+\sum\limits_{j:\: b_{ij}\neq 0}\eta_{ij}\cdot (\psi_j-\bar{\psi}_j),
\end{equation}
where the weights $\eta_{ij}$ are given by
\begin{equation}
\label{eq:wdn_omega}
    \eta_{ij}(\bar{\psi}_i,\bar{\psi}_j)=\frac{\sigma_{(i,j)}^{0.54}\left[ b_{ij}(\bar{\psi}_i-\bar{\psi}_j)\right]^{-0.46}}{\sum\limits_{k:\:  b_{ik}\neq 0} \sigma_{(i,j)}^{0.54}\left[ b_{ik}(\bar{\psi}_i-\bar{\psi}_k)\right]^{-0.46}},
\end{equation}
for all $j\in \J$. \eot
\end{prop}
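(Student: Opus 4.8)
The plan is to obtain \eqref{eq:barhi_firstorder} as the first-order Taylor expansion of the function $\mathbf g_i$ supplied by Lemma~\ref{lem:implicit}, evaluated at the base point $\bm{\psi}_{\J}=\bar{\bm{\psi}}_{\J}$. Concretely, since Lemma~\ref{lem:implicit} guarantees that $\mathbf g_i$ is $C^1$ in a neighborhood of $(\bar{\psi}_i,\bar{\psi}_j)_{j\in\J}$ and that $\bar{\psi}_i=\mathbf g_i(\bar{\bm{\psi}}_{\J})$ (because the bar-values verify \eqref{eq:wdn_hw}), Taylor's theorem gives
\begin{equation*}
\mathbf g_i(\bm{\psi}_{\J}) = \mathbf g_i(\bar{\bm{\psi}}_{\J}) + \sum_{j\in\J}\frac{\partial \mathbf g_i(\bar{\bm{\psi}}_{\J})}{\partial\psi_j}\,(\psi_j-\bar{\psi}_j) + o\!\left(\norm{\bm{\psi}_{\J}-\bar{\bm{\psi}}_{\J}}\right),
\end{equation*}
so that, dropping the higher-order remainder, one arrives at \eqref{eq:barhi_firstorder} with $\eta_{ij}$ being exactly the partial derivative $\partial\mathbf g_i/\partial\psi_j$ evaluated at the base point. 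I would then substitute $\psi_i=\bar{\psi}_i$, $\psi_k=\bar{\psi}_k$ into the derivative formula \eqref{eq:wdn_gderiv} from the lemma to read off \eqref{eq:wdn_omega}: the numerator becomes $\sigma_{(i,j)}^{0.54}[b_{ij}(\bar{\psi}_i-\bar{\psi}_j)]^{-0.46}$ and the denominator the sum of the analogous terms over all neighbors $k$.

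The only genuine content beyond invoking the lemma is the justification that the first-order truncation is legitimate, i.e. that the discarded term is indeed $o(\norm{\bm{\psi}_{\J}-\bar{\bm{\psi}}_{\J}})$; this is immediate from $C^1$-differentiability, but it is worth noting that the expansion is only local — it is valid in the same neighborhood where Lemma~\ref{lem:implicit} applies, which in turn requires the non-degeneracy condition checked in the appendix (the denominator in \eqref{eq:wdn_gderiv} being nonzero at the base point, equivalently the relevant partial of the defining equation not vanishing). I would briefly remark that the base point $(\bar{\psi}_i,\bar{\psi}_j)_{j\in\J}$ is naturally taken to be a known leak-free steady state, around which the interpolation is linearized.

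The main obstacle, such as it is, is bookkeeping rather than mathematics: one must be careful that the index $k$ in the denominator of \eqref{eq:wdn_omega} ranges over all neighbors of $i$ (including $j$ itself), matching the summation $\sum_{k:\,b_{ik}\neq0}$ in \eqref{eq:wdn_gderiv}, and that the branch choices encoded by $b_{ij}\in\{-1,+1\}$ make each bracket $b_{ij}(\bar{\psi}_i-\bar{\psi}_j)$ nonnegative, so that the real power $(\cdot)^{-0.46}$ is well defined at every term where the corresponding head difference is nonzero. Assuming the bar-state has no two adjacent nodes at exactly equal head (so every bracket is strictly positive), all terms are finite and the derivation goes through verbatim; I would state this mild genericity assumption explicitly and then the proof reduces to citing Lemma~\ref{lem:implicit} and truncating the Taylor series.
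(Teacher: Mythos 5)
Your proposal is correct and follows essentially the same route as the paper: invoke Lemma~\ref{lem:implicit} for existence and the derivative formula \eqref{eq:wdn_gderiv}, truncate the Taylor expansion of $\mathbf g_i$ at first order around $(\bar{\psi}_i,\bar{\psi}_j)_{j\in\J}$, and read off $\eta_{ij}$ as the partial derivatives evaluated at the base point. Your added remarks on the $o\!\left(\norm{\bm{\psi}_{\J}-\bar{\bm{\psi}}_{\J}}\right)$ remainder and on the genericity condition that adjacent heads differ strictly (so each bracket raised to the power $-0.46$ is finite) are sound refinements the paper leaves implicit, but they do not change the argument.
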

\begin{proof}
As per the first part of Lemma~\ref{lem:implicit}, we have that there exists $\mathbf g_i$ defined as in \eqref{eq:wdn_g}. We approximate this function by the first two terms of its Taylor expansion centered around $\bar{\psi}_i$, $\bar{\psi}_j,\forall j\in \J$:
\begin{equation}
\label{eq:wdn_gtaylor}
    \mathbf g_i(\bm{\psi}_{\J})\approx\bar{\psi}_i+\mkern-8mu\sum\limits_{j:\: b_{ij}\neq 0}\left[\left.\frac{\partial \mathbf g_i(\bm{\psi}_{\J})}{\partial\psi_j}\right|_{(\bar{\psi}_i,\bar{\psi}_j)}\cdot (\psi_j-\bar{\psi}_j)\right].
\end{equation}
Introducing relations \eqref{eq:wdn_gderiv} into \eqref{eq:wdn_gtaylor} leads to:
\begin{equation}
\label{eq:wdn_gtaylor2}
    \mathbf g_i(\bm{\psi}_{\J})\approx\bar{\psi}_i+\frac{\sum\limits_{j:\:  b_{ij}\neq 0}\sigma_{(i,j)}^{0.54}\left[ b_{ij}(\bar{\psi}_i-\bar{\psi}_j)\right]^{-0.46}\cdot (\psi_j-\bar{\psi}_j)}{\sum\limits_{k:\:  b_{ik}\neq 0} \mkern-8mu \sigma_{(i,j)}^{0.54}\left[ b_{ik}(\bar{\psi}_i-\bar{\psi}_k)\right]^{-0.46}}.
\end{equation}
Rearranging the terms of \eqref{eq:wdn_gtaylor2} directly leads to the values from \eqref{eq:wdn_omega}, thus concluding the proof.
\end{proof}

\subsection{Graph-based state interpolation reformulation}


The weights $\eta_{ij}$ in \eqref{eq:wdn_omega} can be structured like those in \eqref{eq:1},

\begin{equation}
\label{eq:c1}
    \eta_{ij} = \frac{1}{\phi_{i}}\omega_{ij},
\end{equation}

\noindent where

\begin{equation}
\label{eq:c2}
\begin{gathered}
    \omega_{ij}(\bar{\psi}_i,\bar{\psi}_j) = \sigma_{(i,j)}^{0.54}\left[ b_{ij}(\bar{\psi}_i-\bar{\psi}_j)\right]^{-0.46}, \\
    \phi_{i}(\bar{\psi}_i) = \sum\limits_{k:\:  b_{ik}\neq 0} \sigma_{(i,j)}^{0.54}\left[ b_{ik}(\bar{\psi}_i-\bar{\psi}_k)\right]^{-0.46}.
    \end{gathered}
\end{equation}

Thus, $\phi_{i} = \sum\limits_{k:\:  b_{ik}\neq 0} \omega_{ik}$ holds, letting us consider $\omega_{ij}$ as the $i-j$ entry of a weighted adjacency matrix $\mathbf \Omega$ that encodes the relation between nodes $\mathscr{v}_i$ and $\mathscr{v}_j$, with $\phi_{i}$ as the degree of node $\mathscr{v}_i$. Substituting (\ref{eq:c1}) into (\ref{eq:barhi_firstorder}), we obtain

\begin{equation}
\label{eq:5}
    \psi_i=\mathbf g_i(\bm{\psi}_{\J})\approx \bar{\psi}_i+ \frac{1}{\phi_{i}}
 \sum\limits_{j:\: b_{ij}\neq 0}\omega_{ij}\cdot (\psi_j-\bar{\psi}_j),
\end{equation}

\noindent which is written in matrix form as

\begin{equation}
\label{eq:6}
    \bm \psi - \bar{\bm \psi} \approx \mathbf \Phi^{-1} \mathbf \Omega(\bm \psi - \bar{\bm\psi}).
\end{equation}


To arrive at an interpolation procedure similar with \eqref{eq:2} we replace \eqref{eq:6} with 

\begin{equation}
\label{eq:7}
    \hat{\bm \psi} - \hat{\bar{\bm \psi}} = \mathbf{\Delta\hat{\bm \psi}}= \mathbf \Phi^{-1} \mathbf \Omega(\hat{\bm \psi} - \hat{\bar{\bm\psi}}).
\end{equation}

\noindent which uses the estimated hydraulic head vector instead.

The leak-free estimated hydraulic heads are selected to perform the role of $\hat{\bar{\bm\psi}}$, so that $\mathbf{\Delta\hat{\bm \psi}}$ denotes the estimated pressure residuals. Thus, the minimization of the difference between the two sides of (\ref{eq:7}), which is analogue to (\ref{eq:minim-obj-func}), 
implies that the formulation proposed in (\ref{eq:2}) must be modified to consider the change in the interpolated variable, i.e., the method interpolates pressure residuals instead of heads

\begin{subequations}
\label{eq:8}
\begin{align}
\label{eq:8_a}
\min_{\bm{\Delta\hat{\bm \psi}}} \quad & \frac{1}{2}\big[\bm{\Delta\hat{\bm \psi}}^T\bm{L}\bm{\Phi}^{-2}\bm{L}\bm{\Delta\hat{\bm \psi}}\big],
\\
\label{eq:8_b}\textrm{s.t.} \quad &\bm{Z}\bm{\Delta\hat{\bm \psi}}=\bm{\Delta\hat{\bm \psi}}_{\zeta}. 
\end{align}
\end{subequations}

Note that the directionality term of the optimization problem \eqref{eq:2} no longer explicitly appears, as this direction cannot be imposed for the residuals. Nevertheless, it is indirectly included during the derivation of the weights (see (\ref{eq:c2})). The procedure of solving the problem posed in (\ref{eq:8}) is henceforward referred to as AW-GSI (analytical weights graph-based interpolation).

\subsection{Leak Localization via Dictionary Learning}\label{subsubsection:DL}

We achieve leak localization via dictionary learning classification on the available interpolated data.
The training phase of dictionary learning (DL) is performed on the samples generated from the application of AW-GSI to the available pressure residuals.
First, we briefly overview the DL algorithms and their relation to fault isolation.

Sparse representations~\cite{Elad10_book} are used to recover information from noisy samples through the use of a redundant basis,
often called dictionary.
For WDN, the samples are constituted by residuals from sensor readings and interpolated information, for various nominal and leaky scenarios taking place across the network nodes.
Given $n_{ts}$ total sensors (physical and virtual) and $n_{samp}$ scenarios,
we collect this data in the matrix $\bm{Y} \in \R^{n_{ts}\times n_{samp}}$
with which we aim to recover the node where the fault took place. In our case, $\bm{Y}$ would be filled with the entries $\bm{\Delta\hat{\bm \psi}}$, obtained from applying AW-GSI for all the considered leak scenarios and their corresponding data samples.

Given a dictionary $\bm{D}\in \R^{n_{ts}\times n_{atom}}$
consisting of $n_{atom}$ columns, also called atoms,
we obtain the sparse representations $\bm{X} \in \R^{n_{atom}\times n_{samp}}$.
Each column in $\bm{X}$ is $s$ sparse.
To maximize precision,
we specialize the dictionary for each WDN 
through the process of dictionary learning~\cite{DL_book}.
Here we are not solely interested in optimum representations,
but also in finding the dictionary best fitted for the WDN at hand,
\be
\begin{aligned}
& \underset{\bm{D}, \bm{X}}{\min}
& & \norm{\bm{Y}-\bm{DX}}_F^2, \\
& \text{s.t.}
& & \norm{\bm{x}_\ell}_{0} \leq s,\ \ell = 1:n_{samp}, \\
& & & \norm{\bm{d}_j} = 1, \ j = 1:n_{atom}.
\end{aligned}
\label{dict_learn}
\ee
where $\norm{\cdot}_0$ is the pseudo-norm counting the number of nonzeros. 
The goal is
to obtain a specialized dictionary $\bm{D}$
such that when we are given a new measurement $\bm{y}$
we can closely approximate it as $\bm{y}\approx\bm{Dx}$
through the use of a sparse representation algorithm.
In this paper we will be using 
Orthogonal Matching Pursuit (OMP)~\cite{PRK93omp}
for this task.

Model-free identification of a leaky node from the set of network nodes
is a classification task.
To improve DL classification,
we will be learning three separate dictionaries:
$\bm{D}$ to approximate the data,
$\bm{W}$ to classify the data,
and
$\bm{A}$ to specialize atom blocks per each class.
A class is a network node in our case.
This approach is called Label Consistent K-SVD (LC-KSVD)~\cite{JLD13}
and it extends \eqref{dict_learn} to simultaneously perform DL on all three dictionaries
\be 
\min_{\bm{D},\bm{W}, \bm{A}, \bm{X}} \|\bm{Y}-\bm{DX}\|_{F}^2 + \alpha \|\bm{H}-\bm{WX}\|_{F}^2
+ \beta \|\bm{Q}-\bm{AX}\|_{F}^2.
\label{opt_label_cons}
\ee
As it can be seen,
each dictionary is trained on different data.
$\bm{H} \in \R^{n_{class}\times n_{samp}}$ represents the label matrix,
where column $h_i$ has 1 set in the class corresponding to the leaky node in measurements $y_i$ and zeros elsewhere.
$\bm{Q} \in \R^{n_{atom}\times n_{samp}}$ enforces atom sets per class,
where column $q_i$ has ones in the positions corresponding to atoms
that should represent the class to which $y_i$ belongs
and zeros elsewhere.
Please note that in this process we start with
$\bm{Y}$ only containing sensor information,
and add the labels $\bm{H}$ to help identify non-sensorized nodes.

After the learning process, the dictionaries are used to perform fault detection and identification (FDI) on new sensor measurements $\bm{y}$ in a two-step process:
sparse representation $\bm{y}\approx\bm{Dx}$
and classification $i=\arg\max_j\left(\bm{Wx}\right)_j$,
thus obtaining the leaky node $i$.

\begin{remark}\label{remark:DL_complexity}
Looking at computational complexity,
the classification effort evolves around the sparse representation step
where OMP uses $O(sn_{ts}n_{atom})$ instructions for each data-item.
Dictionary learning consists of $K$ rounds of sparse representation 
and dictionary update.
LC-KSVD uses $O(sn_{ts}^2n_{samp})$ instructions to update the dictionary,
sparse representation OMP performance over one data-item uses $O(sn_{ts}n_{atom})$ operations, whereas the dictionary update in LC-KSVD has a complexity of $O(sn_{ts}^2n_{samp})$.\eor
\end{remark}

The choice in dictionary learning is motivated by our success in the past \cite{IS17_ifac,ISP20_jpc}, where DL was used without interpolation to tackle FDI. For large WDNs we extended this approach for online semi-supervised learning in~\cite{IB19_toddler}.

\section{Methodology overview}\label{section:methodology}

Once the core problem of AW-GSI has been defined, let us develop the operational flow of this leak localization methodology that combines interpolation and dictionary learning.

\subsubsection{Data generation}\label{subsubsection:data_generation}

The application of a supervised learning-based technique implies the necessity of sufficiently rich sets of samples and labels. In our case, the labels correspond to the different nodes of the network that may leak. Meanwhile, the samples are data vectors containing information about the state of the actual and virtual sensors at different scenarios and conditions. 
Both samples and labels are required from all possible leaks. To obtain this data, available historical leak datasets from water utilities can be useful. Besides, leak experiments can be executed over the field in order to gain data about the behaviour of the network on those abnormal events.\footnote{Note that nodes whose leaks cause similar effects over the network may be grouped into a common leak group, effectively reducing the number of labels in the learning phase, and hence the number of necessary experiments.} If these sources of information are not available, a hydraulic model of the network can be used to generate the necessary leak events. 

Several aspects/parameters need to be taken into account about the data generation process: 

\begin{itemize}
    \item The number of considered leaks, i.e., different classes/labels, $n_{class} \in \N$.
    \item The number of time instants per leak, $n_t \in \N$.
    \item The number of considered leak sizes, $n_{ls} \in \N$.
\end{itemize}

Then, the complete leak dataset is going to be composed by $n_{samp} = n_{class}n_tn_{ls}$ samples of length $n_{\zeta}$, i.e., $\hat{\bm\Psi}_{\zeta} \in \mathbb{R}^{n_{\zeta}\times n_{samp}}$. Additionally, note that the dataset must be divided into training and testing sets for the sake of the dictionary learning phase, i.e., $\bm\hat{\bm\Psi}^{train} \in \mathbb{R}^{n_{\zeta}\times n_{train}}$ and $\bm\hat{\bm\Psi}^{test} \in \mathbb{R}^{n_{\zeta}\times n_{test}}$, with $n_{samp} = n_{train} + n_{test}$. 

\begin{remark}
Apart from the leak dataset, a leak-free dataset is necessary because DL operates over residuals, which encode the difference in head state between nominal and leak conditions. Thus, each entry of the nominal set $\hat{\bar{\bm\Psi}}_{\zeta} \in \mathbb{R}^{n_{\zeta}\times n_{samp}}$ must be obtained with network boundary conditions (e.g. water inlet pressure, nodal demands, etc.) similar to its analogue leak entry in $\hat{\bm\Psi}_{\zeta}$, with the aim of minimizing possible state differences that are not related to the leak but to changes in consumer demands. 
The nominal dataset is also divided into training and testing, yielding $\hat{\bar{\bm\Psi}}^{train} \in \mathbb{R}^{n_{\zeta}\times n_{train}}$ and $\hat{\bar{\bm\Psi}}^{test} \in \mathbb{R}^{n_{\zeta}\times n_{test}}$.
\eor
\end{remark}

\subsubsection{AW-GSI }\label{subsubsection:aw-gsi}

The quadratic programming problem in (\ref{eq:8}) is the base of AW-GSI. Let us review the key points of the interpolation operation, summarized by Algorithm \ref{alg:aw-gsi}.

\begin{algorithm}[t]
\caption{AW-GSI}
\label{alg:aw-gsi}
\begin{algorithmic}[1]
\REQUIRE {$\hat{\bar{\bm\psi}}_{\zeta}, \hat{\bm\psi}_{\zeta}\in\mathbb{R}^{n_{\zeta}}, \bm \sigma \in \mathbb{R}^{m}$}
\STATE Compute $\hat{\bar{\bm\psi}}^{GSI}$ from $\hat{\bar{\bm\psi}}_{\zeta}$ solving (\ref{eq:2}) [Remark \ref{remark:sGSI}]
\STATE Compute $\bm \hat{\bm{B}}$ from $\hat{\bar{\bm\psi}}^{GSI}$ using (\ref{eq:b-adj})
\STATE Compute $\bm \Omega^{AW}, \bm \Phi^{AW}$ from $\hat{\bm{B}}, \bm \sigma, \hat{\bar{\bm\psi}}^{GSI}$ using (\ref{eq:c2})
\STATE Compute $\bm L^{AW} = \bm \Phi^{AW} - \bm \Omega^{AW}$
\STATE Compute $\bm\Delta\hat{\bm\psi}_{\zeta} = \hat{\bm\psi}_{\zeta} - \hat{\bar{\bm\psi}}_{\zeta}$
\STATE Compute $\bm\Delta\hat{\bm\psi}$ from $\bm\Delta\hat{\bm\psi}_{\zeta}, \bm L^{AW}, \bm \Phi^{AW}$ solving (\ref{eq:8})
\RETURN $\bm\Delta\hat{\bm\psi}$
\end{algorithmic}
\end{algorithm}

First, the selection of $\hat{\bar{\bm \psi}}$ should be defined. Considering that the learning phase is fed with pressure residuals, 
a smart choice would be to select nominal/leak-free values to be represented by $\hat{\bar{\bm \psi}}_{\zeta}$. The standard GSI problem in (\ref{eq:2}) is solved to obtain $\hat{\bar{\bm\psi}}^{GSI}$ from $\hat{\bar{\bm\psi}}_{\zeta}$, but considering the cost function stated in Remark \ref{remark:sGSI}, in order to explicitly pursue the state smoothness over the neighborhood. The complete nominal state vector is then used to compute both the approximate incidence matrix $\bm \hat{\bm{B}}$ and the physical-based method weights, i.e., $\bm \Omega^{AW}$ and $ \bm \Phi^{AW}$, enabling the calculation of the Laplacian matrix $\bm L^{AW}$. Finally, the problem in (\ref{eq:8}) is solved, 
obtaining the desired interpolated residual vector $\bm\Delta\hat{\bm\psi}$.


\subsubsection{Selection of virtual sensors}\label{subsubsection:select-VS}


While more information should improve the leak localization performance, the disturbance induced in the learning phase by the interpolation errors becomes, at some point, significant. Thus, we propose to select a set $\mathcal{Z}_v$ of $n_{vs}$ nodes, which will append their interpolated state to that of the actual sensors. In this way, the total set of sensors would be composed of $n_{ts}=n_{\zeta} + n_{vs}$ nodes.
\textcolor{black}{Please note that} \hlight{C4 - R4}{increasing the number of VSs means increasing $n_{ts}$ which,
as per Remark~\ref{remark:DL_complexity},
results in
a linear increase in the FDI step
and a quadratic increase in the learning process.}

This selection is performed by means of a data-driven sensor placement methodology \cite{RomeroBen2022b}, in order to minimize the use of the hydraulic model. This approach pursues the sensor configuration that minimizes a certain topological-based metric (e.g. Section 2.3 and the explanations around (6) in \cite{RomeroBen2022b}), related with the distance from the sensors to the rest of the nodes. Moreover, it allows to settle a set of pre-defined sensors, so that the rest of them are placed considering that the pre-defined ones are fixed, enabling virtual sensors to be selected in areas where the real sensor density is low. 
In this way, we can achieve the final training and testing datasets $\bm\Delta\hat{\bm\Psi}^{train}_{ts}\in \mathbb{R}^{n_{ts}\times n_{train}}$ and $\bm\Delta\hat{\bm\Psi}^{test}_{ts}\in \mathbb{R}^{n_{ts}\times n_{test}}$.

\subsubsection{Learning/classification stages}\label{subsubsection:learning-class-stages}

Once the input datasets to the learning stage are ready, the DL algorithm is trained and tested. Note that each entry of the datasets must be associated to its corresponding label, i.e., the leak scenario that originated the stored data. These matrices are referred to as $\bm H^{train}\in\R^{n_{class}\times n_{train}}$ and $\bm H^{test}\in\R^{n_{class}\times n_{test}}$. The training and testing methodologies, describing the complete AW-GSI and DL processes, are detailed in Algorithms \ref{alg:awgsi-DL-train}-\ref{alg:awgsi-DL-app}.

\begin{algorithm}[t]
\caption{AW-GSI-DL | Training}
\label{alg:awgsi-DL-train}
\begin{algorithmic}[1]
\REQUIRE $\hat{\bm\Psi}_{\zeta}^{train}, \hat{\bar{\bm\Psi}}_{\zeta}^{train}\in \mathbb{R}^{n_{\zeta}\times n_{train}}, \bm \sigma \in \mathbb{R}^{m}, \bm H^{train}\in\R^{n_{class}\times n_{train}}, s \in \R, n_{vs} \in \R$
\STATE Compute $\bm\Delta\hat{\bm\Psi}^{train}$ from $\hat{\bm\Psi}_{\zeta}^{train}, \hat{\bar{\bm\Psi}}_{\zeta}^{train}, \bm\sigma$ using Algorithm \ref{alg:aw-gsi}
\STATE Select a set $\mathcal{Z}_v$ of $n_{vs}$ virtual sensors using \cite{RomeroBen2022b}
\STATE Extract $\bm\Delta\hat{\bm\Psi}^{train}_{ts}$ from $\bm\Delta\hat{\bm\Psi}^{train}$ using $\mathcal{Z}_v$
\STATE Compute $\bm D, \bm W, \bm A$ from $\bm\Delta\hat{\bm\Psi}^{train}_{ts}, \bm H^{train}$ solving (\ref{dict_learn}) 
\RETURN $\bm D, \bm W, \mathcal{Z}_v$
\end{algorithmic}
\end{algorithm}

\begin{algorithm}[t]
\caption{AW-GSI-DL | Application}
\label{alg:awgsi-DL-app}
\begin{algorithmic}[1]
\REQUIRE $\hat{\bm\Psi}_{\zeta}^{test}, \hat{\bar{\bm\Psi}}_{\zeta}^{test}\in \mathbb{R}^{n_{\zeta}\times n_{test}}, \bm \sigma \in \mathbb{R}^{m}, s \in \R, \mathcal{Z}_v$ 
\STATE Compute $\bm\Delta\hat{\bm\Psi}^{test}$ from $\hat{\bm\Psi}_{\zeta}^{test}, \hat{\bar{\bm\Psi}}_{\zeta}^{test}, \bm\sigma$ using Algorithm \ref{alg:aw-gsi}
\STATE Extract $\bm\Delta\hat{\bm\Psi}^{test}_{ts}$ from $\bm\Delta\hat{\bm\Psi}^{test}$ using $\mathcal{Z}_v$
\STATE Compute $\bm X$ from $\bm\Delta\hat{\bm\Psi}^{test}_{ts}, \bm D, s$ using OMP
\RETURN $i = \arg\max_{j=1:n_{class}}(\bm{Wx})_j,\: \forall \bm x\in \bm X$
\end{algorithmic}
\end{algorithm}

Note that the application algorithm can be operated in real-time, providing single head state vectors (for the sensorized nodes) instead of a testing dataset matrix. The nominal information can be retrieved from a historical dataset, or from the instants before the leak detection.





\section{Results and discussion\protect\footnotemark}\label{section:results}

\footnotetext{\hlight{C1 - R1}{Code and data available at \url{https://github.com/pirofti/AW-GSI-DL}}}

The proposed methodology, AW-GSI-DL, is implemented and tested over a case study network, comparing its performance in terms of interpolation accuracy and localization results with respect to GSI-DL \cite{Irofti2022}. Moreover, both methods are compared with a purely data-driven approach, GSI-LCSM \cite{RomeroBen2022}. Thus, the advantages of the presented updates to the interpolation process are highlighted, as well as the improvements of including the learning stage to help the interpolation scheme with the localization task. 
All the localization simulations were performed on
an AMD Ryzen Threadripper PRO 3955WX with
512GB of memory using Octave 8.1.0.

\subsection{Case study}\label{subsection:case-study}

The network of Modena, Italy, constitutes a well-known case study in the WDNs management field \cite{Bragalli2012,Quinones2021, Alves2022}. It is an open-source benchmark, which helps with the replicability and comparison of results. The network topology is schematically represented in Fig.~\ref{fig:modena_scheme}, and its main properties are summarized by Table \ref{table:modena-characteristics}. Both the network size (in terms of number of nodes/pipes, reservoirs and pipe lengths) and the total demand \textcolor{black}{correspond to a problem of realistic size.}

A set of 20 pressure sensors are considered to be deployed over the network, i.e., 4 sensors at the water inlets and 16 at inner nodes. The sensor locations are decided by means of the previously mentioned data-driven sensor placement technique, presented in \cite{RomeroBen2022b}. 

\begin{table}[t]
\renewcommand{\arraystretch}{1.3}
\caption{Modena WDN characteristics}
\label{table:modena-characteristics}
\centering
\begin{tabular}{|c|c|}
\hline
\textbf{Property} & \textbf{Value}\\
\hline
Number of inner junctions & 268 \\
\hline
Number of pipes & 317 \\
\hline
Number of reservoirs (water inlets) & 4 \\
\hline
Total pipe length & 71.8 km \\
\hline
Total nodal demand & $\sim$400 l/s \\
\hline
\end{tabular}
\end{table}

\begin{figure}[htb]
\centering
\includegraphics[width=\linewidth]{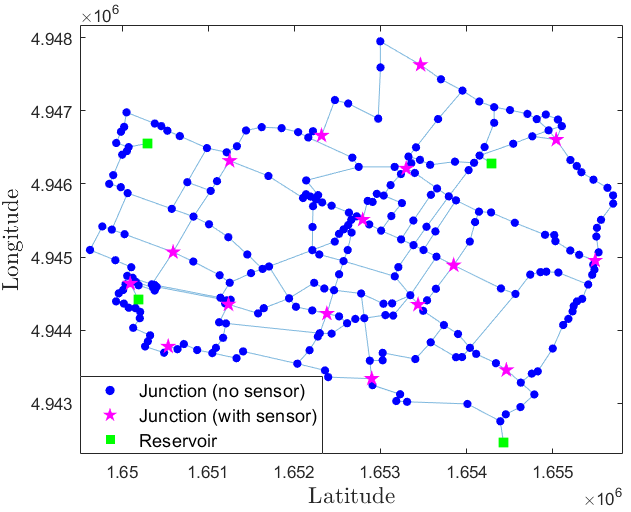}
\caption{Schematic representation of the Modena WDN.}
\label{fig:modena_scheme}
\end{figure}

\subsection{Data generation}\label{subsection:data-gen}

Hydraulic data from a wide set of leak scenarios is required to assess the performance of the methodology. To this end, EPANET has been used to generate leak data for all the possible scenarios regarding the leak location. The values of the data generation parameters presented in Section \ref{subsubsection:data_generation} in the performed EPANET simulations are the following:

\begin{itemize}
    \item All the possible leaks are considered, i.e., $n_{class} = 268$ (excluding the 4 water inlets). 
    \item The simulations cover a period of 24 hours, with a timestep of 1 hour, and hence $n_t = 24$.
    \item Several leak sizes have been considered to replicate realistic scenarios. 
    In this case, training sets have been derived considering four different leak sizes, ranging from 4 l/s to 7 l/s (i.e., $\sim$1-1.75\% of the average total inflow) with a step of 1 l/s. Three testing leak sizes have been selected to be different to the training ones 
    but within the training range: 4.5, 5.5 and 6.5 l/s. Thus, $n_{ls}=7$. \hlight{C2 - R4}{Using leaks with smaller magnitudes for training makes the FDI mechanism more challenging since the discrimination between ``healthy but affected by uncertainties'' and ``under fault'' becomes harder to check. Thus, we consider the current leak range to be adequate for both training and subsequent testing. Not least, recent studies dealing with the same benchmark consider similar or even larger leak sizes because of this fact \cite{Alves2022}}.
\end{itemize}

Additionally, several sources of uncertainty have been considered during the data generation to test the approach in realistic conditions. On the one hand, sensor precision is considered, trusting the generated head values until a precision of $\pm1$ cm. Thus, leak effects are not noticeable until the minimum level of precision is reached.
On the other hand, 
uniformly random uncertainty has been added to physical properties of the WDN, specifically pipe diameters and roughness, as well as the daily demand patterns. Two cases have been studied: 0.5\% and 1\% of uncertainty with respect to the noise-free value. Note that the methodology requires a leak-free scenario, which is also affected by uncertainty, to achieve the pressure residuals. \hlight{C5 - R4}{Thus, the input data to the learning stage is affected by a higher degree of uncertainty}. 


\textcolor{black}{Let us remark that considering all the mentioned sources of uncertainty, i.e., leak size, modeling (network) properties, sensor precision and demand pattern, is not common among most of the state-of-the-art mixed model-based/data-driven leak detection/localization methods, as it can be appreciated in Table I of \cite{RomeroBen2023}, which summarizes the features of 54 relevant articles in the field.
}

Finally, let us remark that leaks are generated by means of the emitter component available in EPANET. It models the flow through an orifice that leaks to the environment 
as

\begin{equation}
    f^{leak}_i = \epsilon \rho_i^{0.5},
\end{equation}

\noindent where $\epsilon$ is the emitter coefficient, and $\rho_i$ is the pressure at node $\mathscr{v}_i$. 

\subsection{Interpolation performance}\label{subsection:interp-perf}


In order to analyze the improvements of AW-GSI with respect to GSI in terms of state estimation performance, let us derive two different metrics related to the difference between interpolated and actual values. Specifically, they are based on the root-squared-mean error (RMSE), which is computed as

\begin{equation}
    RMSE(\bm x, \hat{\bm x}) = \sqrt{\frac{1}{n} \sum_{i=1}^{n} (x_i - \hat{x}_i)^2},
\end{equation}

\noindent where $\bm{x}$ is a generic vector whose $i-th$ element corresponds to a value at node $\mathscr{v}_i$, and $\hat{\bm x}$ is the computed estimation of $\bm{x}$.

The first metric assesses the accuracy of the hydraulic head reconstruction by calculating the error between actual EPANET and interpolated values., i.e., $RMSE(\bm\psi,\hat{\bm\psi})$. The second metric, computed as $RMSE(\bm{\Delta\psi},\bm\Delta\hat{\bm\psi})$, gauges the similarity between the residuals received by the learning scheme from interpolation and EPANET, indicating the quality of the training data. Both calculations are performed for each leak case and time instant, averaging over the latter to obtain a RMSE value for each leak.


Note again that GSI retrieves heads and AW-GSI yields residuals. Thus, for the sake of the comparisons, it is required to compute the residuals associated to the GSI heads and the heads that correspond to the AW-GSI residuals. For GSI, the difference between the each leak head vector and the corresponding leak-free data yields the GSI associated residuals. For AW-GSI, the nominal heads vector is added to the estimated residuals to obtain the leak hydraulic heads.

The interpolation results are displayed in Fig.~\ref{fig:residual_and_head_errors}. The figure is divided into three columns and two rows. Each column represents a different scenario regarding the considered uncertainty. Specifically, the cases of 0, 0.5 and 1\% of uncertainty are exposed. Regarding the rows, the first one shows the results for the residual-error, whereas the second row shows the head-reconstruction error. Each vertical line denotes the RMSE of the corresponding leak, with blue and red lines showing GSI and AW-GSI performance respectively. Note that the mean values (over the leaks) are indicated with cyan and green horizontal lines for GSI and AW-GSI respectively. Important conclusions can be extracted from these results:


\begin{figure}[t]
\centering
\includegraphics[width=\linewidth]{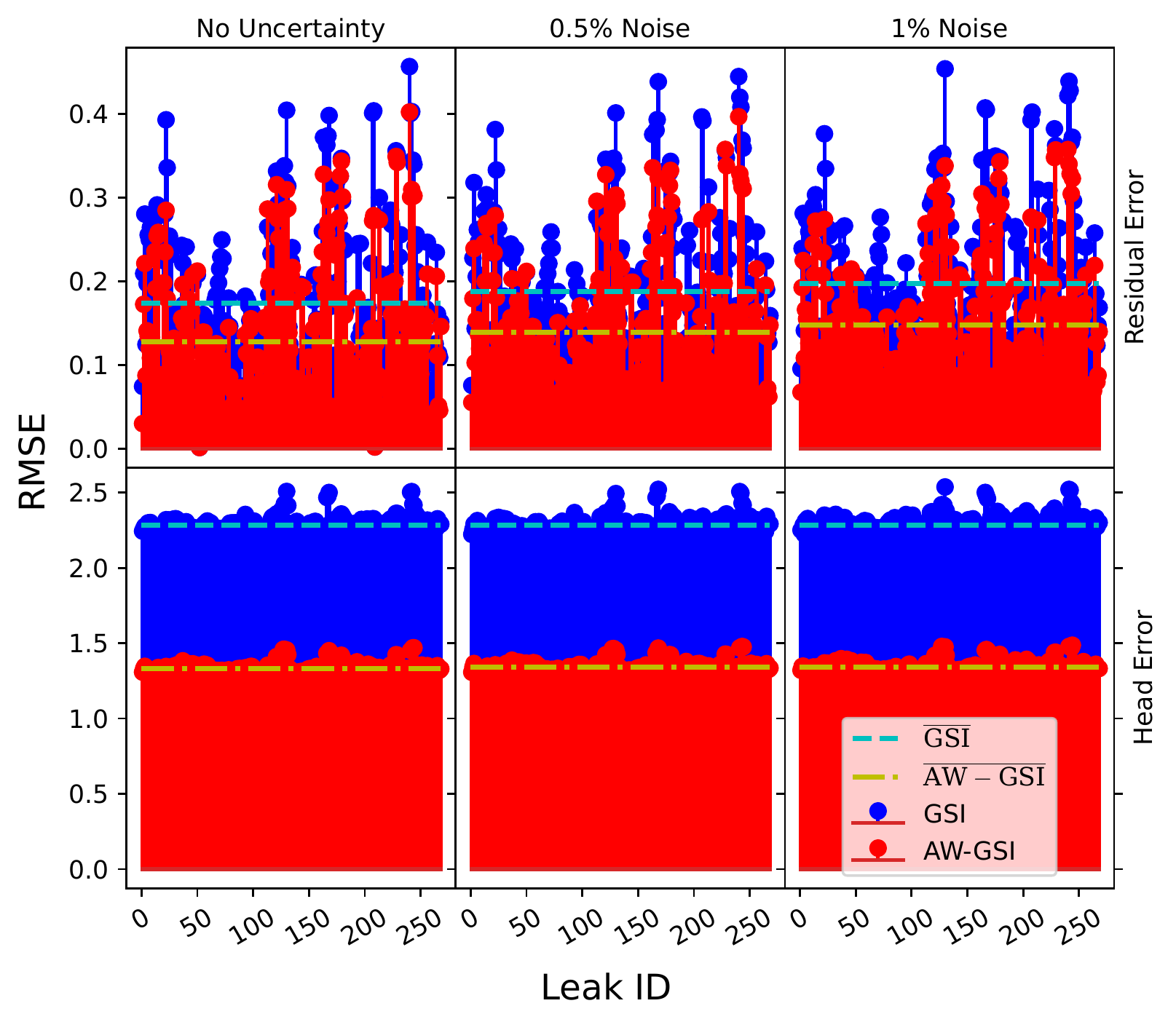}%
\caption{Interpolation RMSE per leak -- GSI vs AW-GSI -- head and residual errors.}
\label{fig:residual_and_head_errors}
\end{figure}




\begin{itemize}
    \item About the head-reconstruction error, AW-GSI clearly outperforms GSI, yielding a lower error for all the possible leaks. An analysis of the mean values (over the leaks) reveals a reduction of 41.65\% in the head-error. Thus, we conclude that AW-GSI is clearly capable of estimating the network hydraulic heads with a higher degree of accuracy.
    \item Regarding the residual-error, the difference in terms of performance between both interpolation schemes is not as large as in the previous case. Even so, the error associated to AW-GSI is lower than the error of GSI in a 88.06\% of the leak scenarios, with an average (over the leaks) reduction of 26.62\% in RMSE. Thus, AW-GSI arguably performs better than GSI regarding residual-error too. 
    \item Analyzing the effect of uncertainty, it becomes clear that the interpolation process is not degraded. Both GSI and AW-GSI are robust against physical properties and demand pattern noise. This is specially interesting for AW-GSI, considering that the AW-generation is fed with the noise-free pipe diameters and roughness, and hence the weights are obtained with information that slightly differs from the one used to generate the hydraulic data.
\end{itemize}

\subsection{Localization performance}\label{subsection:loc-performance}

The interpolation results from the previous stage, for both GSI and AW-GSI, were used as inputs to the DL process, leading to the training and posterior testing of the methodologies for leak localization. Note that DL is fed with normalized residuals, so they must be computed in the case of GSI from the interpolated leak and leak-free vectors (both GSI and AW-GSI residuals have to be normalized).

The localization results are presented at Fig.~\ref{fig:loc_performance}. It is divided into three columns, each displaying performance for different uncertainty levels (0\%, 0.5\%, and 1\%). In each column, three accuracy levels are compared based on the number of successful targets considered: 

\begin{itemize}
    \item Node-level accuracy (indicated by circle markers): indicates the percentage of testing samples for which the exact leaky node is selected by the trained DL scheme.
    \item 1-neighbour level accuracy (indicated by triangle markers):  denotes the percentage of successful tests when the 1st layer of neighbours of the exact leaky node are considered as correct locations too.
    \item 2-neighbour level accuracy (indicated by square markers):  it is analogue to the previous metric, but considering the exact node and the first and second layers of neighbours.
\end{itemize} 

Besides, the x-axis stands for the number of virtual sensors that are considered in each case. In this way, an analysis of the effect of including extra VS can be performed. Specifically, the cases of 0, 10, 20, 50, 100, 150, 200 and 252 are included (this last case implies that all the non-sensorized nodes provide information to the learning process through the interpolation data). Note that the previously mentioned sensor placement methodology \cite{RomeroBen2022b} is used to add a set of new sensors (virtual) to the existing set of installed ones. In this way, the VS are located at areas with a low density of pressure sensors.

\begin{figure}[t]
\centering
\includegraphics[width=\linewidth]{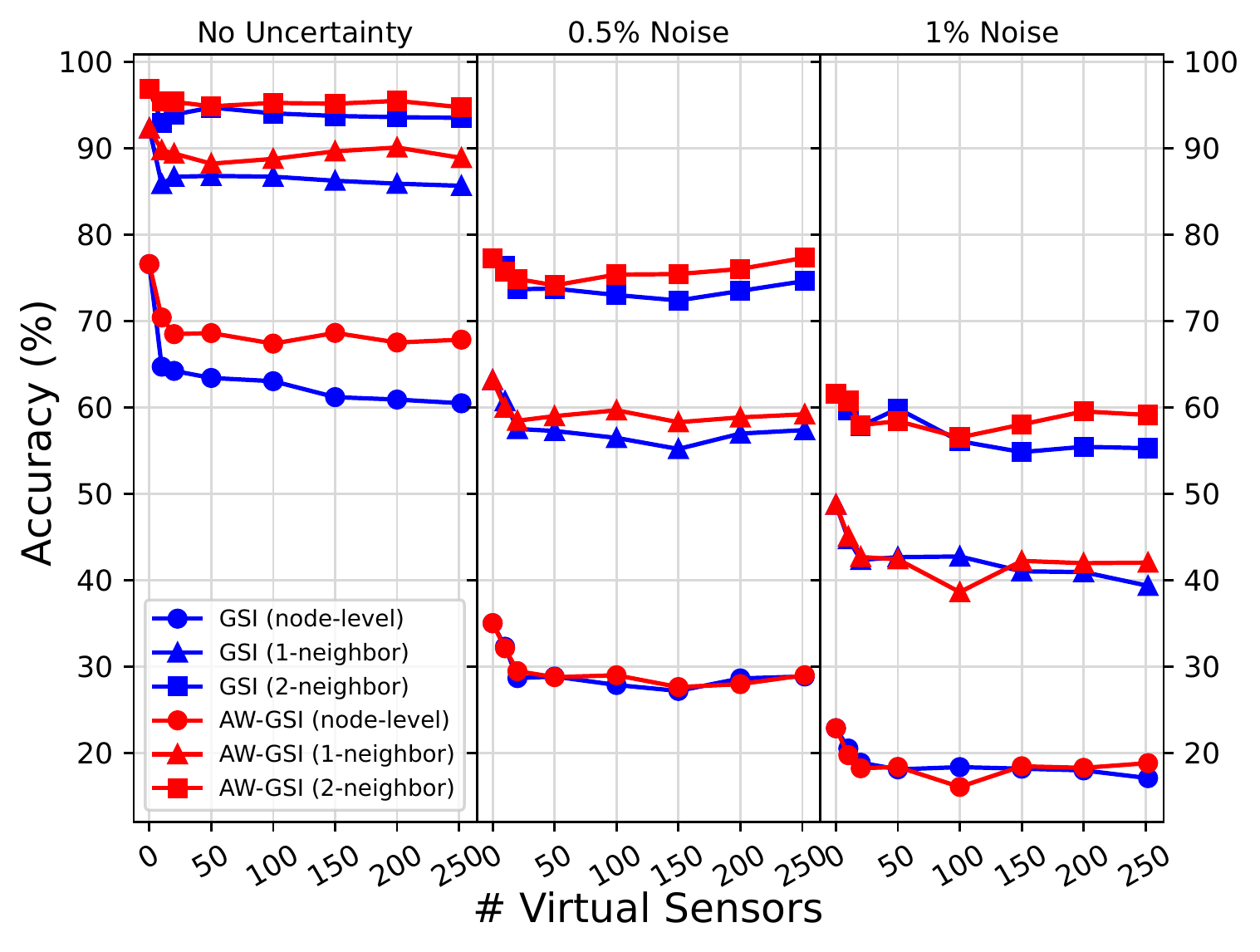}%
\caption{Localization performance $-$ GSI vs AW-GSI. The x-axis shows the number of VS considered, whereas the y-axis shows the localization accuracy (\%). The three columns of subfigures represent different uncertainty levels: no uncertainty, 0.5\% and 1\%.}
\label{fig:loc_performance}
\end{figure}


\begin{table}[t]
\caption{Execution times for the localization results presented in Fig.~\ref{fig:loc_performance}.}
\label{table:times}
\footnotesize
\centering
\begin{tabular}{c|c|c|c|c|c}
\textbf{\#VS} & 0 & 50 & 100 & 150 & 252\\
\hline
\textbf{t(min)} & $73\pm 0.7$ & $74\pm 0.9$ & 
$75\pm 1$ & $76\pm 0.8$ & $77\pm 0.7$\\
\end{tabular}
\end{table}

These results can be analyzed from various perspectives:

\begin{itemize}
	\item Regarding the comparison between both interpolation methodologies, it is clear that the localization accuracy is greater when AW-GSI is used instead of GSI
 , for the vast majority of tested scenarios. 
	\item The inclusion of uncertainty degrades the localization performance, affecting both interpolation schemes. Nevertheless, the accuracy results for both 1 and 2-level neighbors are mostly satisfactory, considering that, in average, the 1-level layer of neighbors only occupies a 1.25\% of the total area of network (considering the proportion of neighbors by total WDN nodes), whereas the 2-level layer of neighbors occupies 2.55\% of the area.
	\item Note that the 2-neighbors level accuracy is closer between GSI and AW-GSI when no uncertainty is considered, while greater differences occur for the node-level case. However, the opposite is remarked when uncertainty is introduced. This occurs because uncertainty mostly hinders node-level accuracy, making node-level successful localizations with no uncertainty to fall to the category of 1 or 2-neighbors level successful localizations.
	\item Regarding the number of VSs, it can be observed that the accuracy drops in comparison to the no-VS case. This is caused by the interpolation still introducing defects in the input data to the interpolation process. \textcolor{black}{When starting to introduce VSs,} \hlight{C3 - R4}{a sudden drop in the accuracy occurs, mostly in the node-level, due to the mismatch}\hlight{C5 - R4}{ that may occur between measured and interpolated data. This drop is diminished or even overcome, as the extra information introduced when including more and more VS is consistent with the already provided virtual estimations}. However, let us remark that AW-GSI outperforms GSI in terms of diminishing the performance drop, even achieving higher performances for 252 VS than for more reduced sets of VS in some scenarios.
\end{itemize}

To complement the localization performance results,
we also present in Table~\ref{table:times} the average execution times of the training process
as the number of virtual sensors increase.
We observe a slight increase as the number grows,
but it is not significant or in any way prohibiting
the use of numerous virtual sensors.
\hlight{C4 - R4}{Profiling showed that most time is spent during the sparse representation stage (79\%),
followed by the dictionary update (19\%)
and the rest is spent iterating over the two.}

Another experiment has been carried out to highlight the contribution of the combination of interpolation and learning to increase the localization accuracy. Both interpolation-learning strategies, i.e., GSI-DL and AW-GSI-DL, are compared with the localization strategy, based on a leak candidate selection method (LCSM), which originally exploited GSI to interpolate approximated hydraulic heads: GSI-LCSM \cite{RomeroBen2022}. 

The graphical result of this comparison is presented in Fig.~\ref{fig:comparison_GSILCSM_GSIDL_AWGSIDL}. In this figure, the x-axis stands for the considered depth of neighbours of the actual leak that are considered as successful targets for the classification process (as detailed for the cases of node-level, 1-neighbour level and 2-neighbour level accuracy in the explanations of Fig.~\ref{fig:loc_performance}). This depth is referred to as the localization area, as it correspond to the set or area of nodes that constitute a correct localization. Thus, the aim of this experiment is to study the evolution of the accuracy from node-level localization up to a certain degree of neighbours. In this case, a maximum of 6-neighbour level is considered, which represents a 13.31\% of the total number of network nodes (in average). This selection is justified by the original nature of GSI-LCSM: the candidate selection stage of this method dynamically selects the size of the set of candidates (which can be regarded as the set of successful targets), as the methodology was designed to yield localization areas. For the sake of the comparison between GSI-LCSM and the two DL-based strategies, this dynamic set selection was substituted by the predefined sets of neihgbours corresponding to each node and depth level. Thus, we considered appropriate to reach at least a 12-13\% of the network with the maximum level, regarding the results obtained in terms of candidate areas in previous works based on GSI-LCSM \cite{Romero2021}. Regarding GSI-DL and AW-GSI-DL, let us remark that the case of 252 VS (that is, all the network nodes are real or virtual sensors) is selected for the comparison, considering that GSI-LCSM exploits all the interpolated information during LCSM.

The results presented yield several interesting conclusions:

\begin{itemize}
    \item The DL-based methods greatly improve accuracy at node-level (and reduced depths of neighbors) in comparison to GSI-LCSM. This confirms the suitability of the approach to accomplish one of the aims we were pursuing, specifically the improvement of the usually unreliable localization of purely data-driven methods at node-level.
    \item Although the performance of the three methods become more similar for greater depths of neighbors, the DL-based approaches continues yielding higher accuracy.
    \item Noise degrades the node-level (and reduced depths of neighbors) accuracy of the learning-based techniques, although the performance still improves GSI-LCSM, whose results are not affected by noise. 
\end{itemize}


\begin{figure}[t]
\centering
\includegraphics[width=\linewidth]{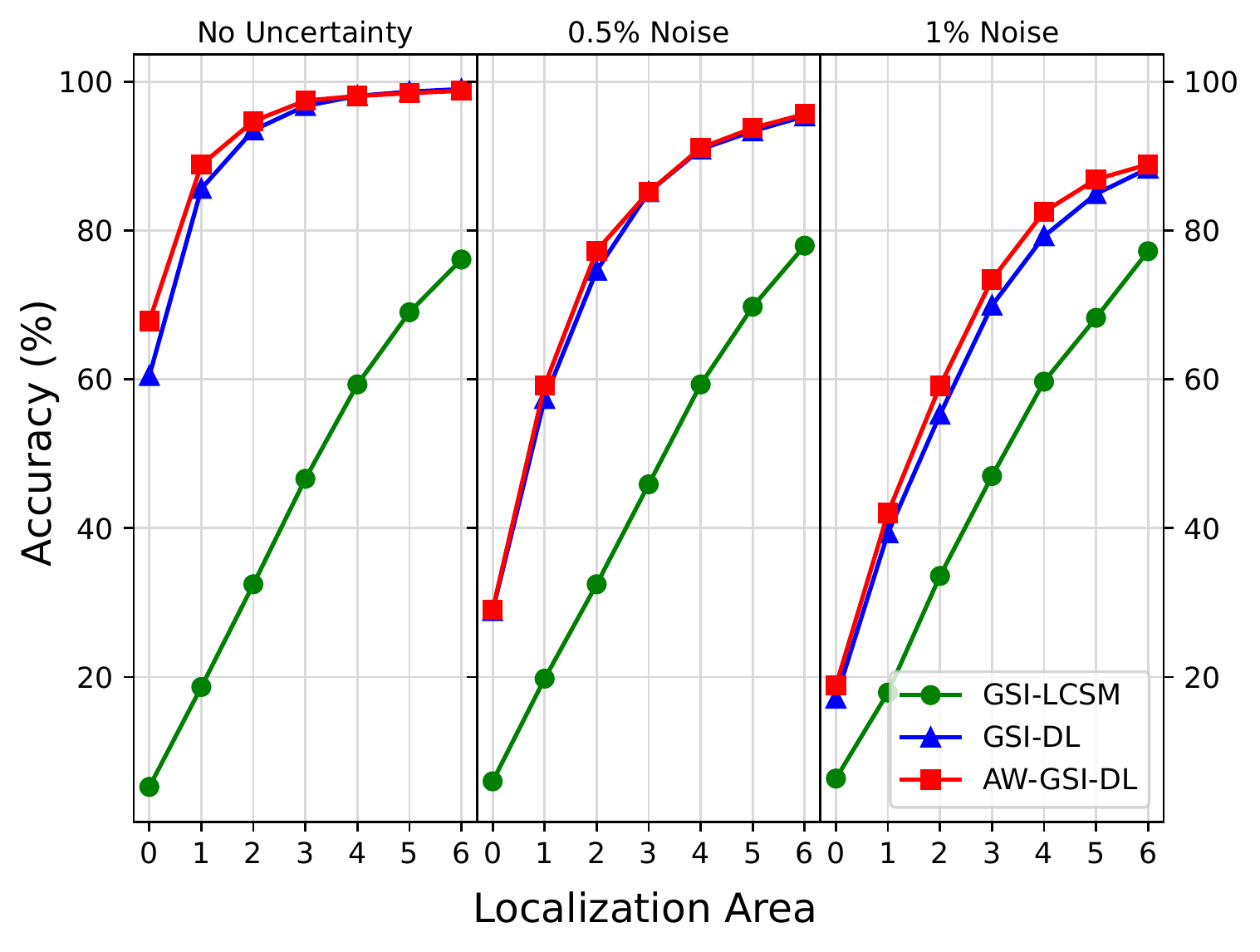}
\caption{Localization performance $-$ GSI-LCSM vs GSI-DL vs AW-GSI-DL. The x-axis denotes the depth of neighbors of the leaky node considered as correct targets, and the y-axis represent the localization accuracy (\%). From top to bottom, the subplots represent the cases considering 0\%, 0.5\% and 1\% noise respectively.}
\label{fig:comparison_GSILCSM_GSIDL_AWGSIDL}
\end{figure}

\section{Conclusions}\label{section:conclusions}

This article presents a leak localization method combining interpolation and dictionary learning. The novel interpolation technique, i.e., AW-GSI, leverages the physics behind the relation between hydraulic heads of neighboring nodes in WDNs. Additionally, residuals are directly interpolated instead of hydraulic head values. 

The method is applied to the well-known Modena case study, proving AW-GSI's improvements over GSI in terms of interpolation error, considering state and residual estimation; and posterior localization. Performance is tested for several uncertainty levels to assess its effect over the estimated heads/residuals and localization results. We note that uncertainty does not affect the interpolation process, although it degrades the localization accuracy up to some extent. Besides, localization experiments involved multiple virtual sensor sets and various degrees of target node depth. 

Improvements are necessary to enhance the interpolation process and enrich the information for the learning stage. Additionally, boosting dictionary learning's robustness to noisy data can mitigate classification degradation caused by uncertainty. \hlight{C5 - R4}{This will allow to overcome the limitations of the learning approaches caused by the noise, leading to a useful tool for water utilities to work with real data}. Nevertheless, promising results have been obtained in this work, showing improvements with respect to previous methods, and confirming the suitability of the combined interpolation-learning scheme to compensate the weaknesses of interpolation-only methods.

\appendix[Proof of Lemma~\ref{lem:implicit}]

The implicit function theorem states that for a function $\mathbf f:\mathbb R^{n+m}\mapsto \mathbb R^m$ which respects the implicit relation $\mathbf f(\mathbf x,\mathbf y)=\mathbf 0$, with $\mathbf x\in \mathbb R^m$, $\mathbf y\in \mathbb R^n$, in a neighborhood around $(\bar{\mathbf x}, \bar{\mathbf y})$ verifying $\mathbf f(\bar{\mathbf x}, \bar{\mathbf y})=0$, there exists \hlight{C2 - R11}{(under reasonable assumptions, \cite{protter1985implicit})} an explicit dependence $\mathbf y=\mathbf g(\mathbf x)$ such that $\mathbf f(\mathbf x,\mathbf g(\mathbf x))=\mathbf 0$. Moreover, we have that
\begin{equation}
\label{eq:implicit_function}
    \frac{\partial \mathbf g(\mathbf x)}{\partial \mathbf x_j}=-\left[\frac{\partial \mathbf f(\mathbf x,\mathbf y)}{\partial \mathbf y}\right]^{-1}\cdot \frac{\partial \mathbf f(\mathbf x,\mathbf y)}{\partial \mathbf x_j},\: \forall j=1\dots n.
\end{equation}

Equation \eqref{eq:wdn_hw} defines an implicit relation between the current junction's head $\psi_i$ and its neighboring junctions' heads $\{\psi_j\}_{j:\:  b_{ij}\neq 0}$. We may then define the implicit function $\mathbf f_i(\psi_i, \bm{\psi}_{\J})=0$ where $\mathbf f_i:\mathbb R^{|\J|+1}\mapsto \mathbb R$ and which comes from moving $\mathbf c_i$ to the left of the equal sign in \eqref{eq:wdn_hw} and by grouping all neighboring node's head values in the vector $\bm{\psi}_{\J}$, defined as below \eqref{eq:wdn_g}. Note that the construction is particular in the sense that $\mathbf f_i$ is a scalar function and thus the Jacobian inversion from \eqref{eq:implicit_function} is a simple scalar division.

\noindent Applying the apparatus of the implicit function theorem we have that for a pair $(\bar{\psi}_i, \bar{\psi}_{\J})$ verifying $\mathbf f_i(\bar{\psi}_i,\bar{\psi}_{\J})=0$ there exists the scalar function $\mathbf g_i:\mathbb R^{|\J|}\mapsto \mathbb R$ which verifies $\psi_i=\mathbf g_i(\bm{\psi}_{\J})$ on a neighborhood of $(\bar{\psi}_i, \bar{\psi}_{\J})$, thus proving \eqref{eq:wdn_g} and allowing to adapt \eqref{eq:implicit_function} into
\begin{equation}
    \label{eq:implicit_function_hi}
    \frac{\partial \mathbf f_i(\bm{\psi}_{\J})}{\partial \psi_j}=-\left[\frac{\partial \mathbf f_i(\psi_i,\bm{\psi}_{\J})}{\partial \psi_i}\right]^{-1}\cdot \frac{\partial \mathbf f_i(\psi_i,\bm{\psi}_{\J})}{\partial \psi_j},\: \forall j\in \J.
\end{equation}
Differentiating after $\psi_i$ in \eqref{eq:wdn_hw}, we obtain
\begin{align}
    \nonumber\frac{\partial \mathbf f_i(\psi_i,\bm{\psi}_{\J})}{\partial \psi_i}&=\frac{\partial}{\partial \psi_i}\mkern-8mu\left(\sum\limits_{k:\:  b_{ik}\neq 0} \mkern-16mu b_{ik}\sigma_{(i,j)}^{0.54}\left[ b_{ik}(\psi_i-\psi_k)\right]^{0.54}\mkern-16mu-\mathbf c_i\right),\\
    \label{eq:hw_current_node_i}&=0.54\mkern-16mu\sum\limits_{k:\:  b_{ik}\neq 0} \mkern-16mu b^2_{ik}\sigma_{(i,j)}^{0.54}\left[ b_{ik}(\psi_i-\psi_k)\right]^{-0.46}\mkern-36mu.
\end{align}
We repeat the procedure for each $\psi_j, j\in \J$ and obtain
\begin{equation}
\label{eq:hw_current_node_j}\frac{\partial \mathbf f_i(\psi_i,\bm{\psi}_{\J})}{\partial \psi_j}=-0.54 b^2_{ij}\sigma_{(i,j)}^{0.54}\left[ b_{ij}(\psi_i-\psi_j)\right]^{-0.46}.
\end{equation}
Noting that $ b_{ij}^2=1,\forall  b_{ij}\neq 0$, we introduce \eqref{eq:hw_current_node_i} and \eqref{eq:hw_current_node_j} into \eqref{eq:implicit_function_hi}, thus arriving at \eqref{eq:wdn_gderiv} and concluding the proof.


%



\section*{Acknowledgment}

The authors would like to thank the Spanish national project L-BEST (Ref. PID2020-115905RB-C21).
This work was also supported by a grant of the Ministry of Research, Innovation and Digitization, CCCDI - UEFISCDI, project number PN-III-P2-2.1-PED-2021-1626, within PNCDI III.

\ifCLASSOPTIONcaptionsoff
  \newpage
\fi



%



\bibliographystyle{IEEEbib}
\bibliography{gsi-dl} 

%

\begin{IEEEbiography}[{\includegraphics[width=1in,height=1.25in,clip,keepaspectratio]{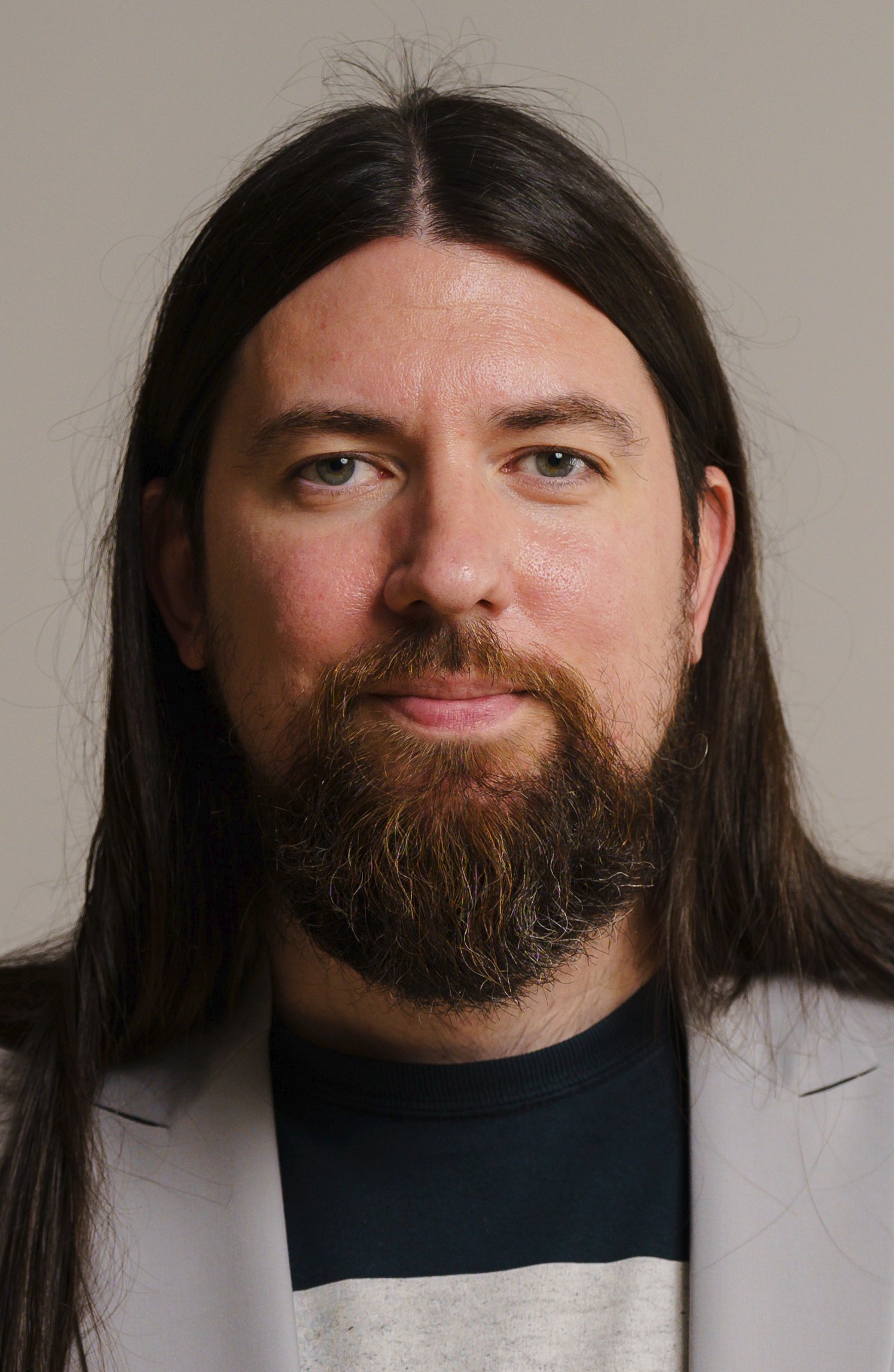}}]{Paul Irofti}
is an Associate Professor
within the Computer Science Department
of the Faculty of Mathematics and Computer Science
at the University of Bucharest
and the Vice-President
of the
Institute for Logic and Data Science.
He is the co-author of the book “Dictionary Learning Algorithms and Applications” (Springer 2018)
awarded by the Romanian Academy.
He is PhD in Systems Engineering at the Politehnica University of
Bucharest since 2016.
His interests are anomaly detection, signal processing, 
numerical algorithms and optimization.

\end{IEEEbiography}

\begin{IEEEbiography}[{\includegraphics[width=1in,height=1.25in,clip,keepaspectratio]{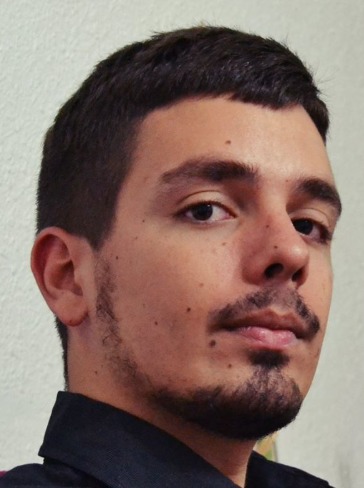}}]{Luis Romero-Ben}
is currently a PhD student and researcher at the Automatic Control group at the Institut de Robòtica e Informàtica Industrial (CSIC-UPC), Spain. His interests are focused on the development and application of data-driven methodologies for the control and monitoring of water distribution networks and urban drainage systems.
\end{IEEEbiography}


\begin{IEEEbiography}[{\includegraphics[width=1in,height=1.25in,clip,keepaspectratio]{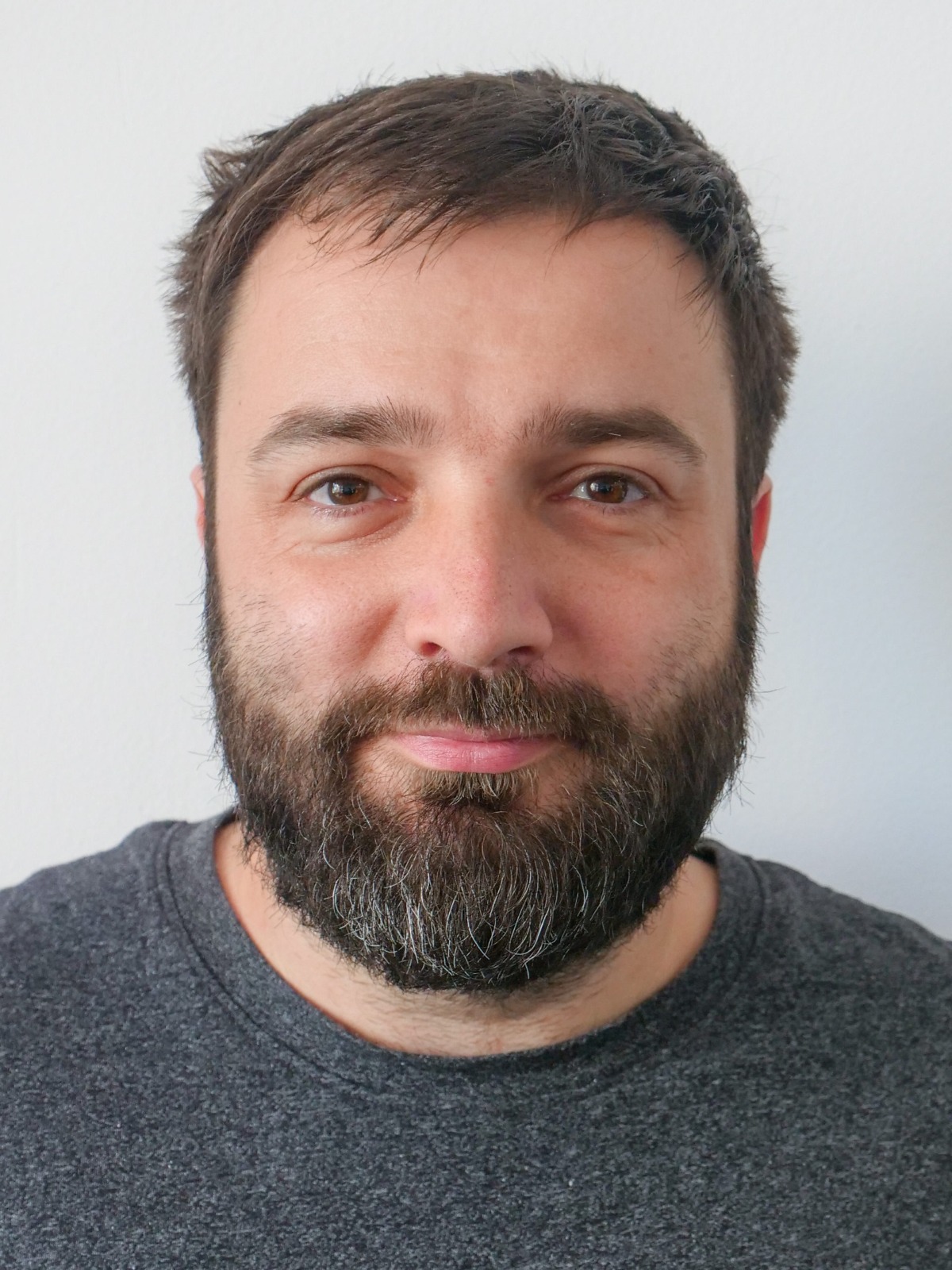}}]{Florin Stoican}
 is Professor in the department of Automatic Control and Systems Engineering, Politehnica University of Bucharest. He obtained his PhD in Control Engineering in 2011 from Supelec (now CentraleSupelec), France with an application of set-theoretic methods for fault detection and isolation. His interests are constrained optimization control, set theoretic methods, fault tolerant control, mixed integer programming, motion planning.
\end{IEEEbiography}

\begin{IEEEbiography}[{\includegraphics[width=1in,height=1.25in,clip,keepaspectratio]{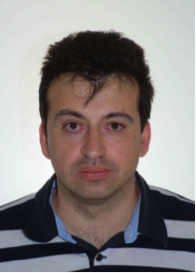}}]{Vicenç Puig}
 holds a PhD in Automatic Control, Vision and Robotics and is the leader of the research group Advanced Control Systems (SAC) at the Polytechnic University of Catalonia.
 He has important scientific contributions in the areas of fault diagnosis and fault tolerant control using interval models. He participated in more than 20 international and national research projects in the last decade. He led many private contracts, and published more than 80 articles in JCR journals and more than 350 in international conference/workshop proceedings. Prof. Puig supervised over 20 PhD theses and over 50 MA/BA theses.
\end{IEEEbiography}


\vfill


\end{document}